\def\calC{\mathcal{C}}
\def\calI{\mathcal{I}}
\def\scrL{\mathscr{L}}
\def\rarrow{\overrightarrow}
\newtheorem{observation}{Observation}
\begin{document}

\title{Dispersing Points on Intervals\thanks{A preliminary version of this paper will appear in the Proceedings of the 27th International Symposium Algorithms and Computation (ISAAC 2016).}
}

\author{
Shimin Li
\and
Haitao Wang
}

\institute{
Department of Computer Science\\
Utah State University, Logan, UT 84322, USA\\
\email{shiminli@aggiemail.usu.edu, haitao.wang@usu.edu}\\
}

\maketitle

\pagestyle{plain}
\pagenumbering{arabic}
\setcounter{page}{1}

\begin{abstract}
We consider a problem of dispersing points on disjoint intervals on
a line. Given $n$ pairwise disjoint intervals sorted on a
line, we want to find a point in each interval such that the
minimum pairwise distance of these points is maximized.
Based on a greedy strategy, we present a linear time algorithm for the problem.
Further, we also solve in linear time the cycle version of the problem where
the intervals are given on a cycle.
\end{abstract}

\section{Introduction}
\label{sec:intro}
The problems of dispersing points have been extensively studied and
can be classified to different categories by their
different constraints and objectives, e.g.,
\cite{ref:FernandezTh13,ref:JagerSo07,ref:ProkopyevTh09,ref:RaviHe94,ref:RaviFa91,ref:WangA88}.
In this paper,
we consider problems of dispersing points on intervals in linear domains
including lines and cycles.

Let $\mathcal{I}$ be a set of $n$ intervals on a line $\ell$, and no
two intervals of $\calI$ intersect. The problem is to find a point in
each interval of $\calI$ such that the minimum distance of any
pair of points is maximized. We assume the intervals of $\calI$ are
given sorted on $\ell$. In this paper
we present an $O(n)$ time algorithm for this problem.

As an application of the problem, consider the following scenario. Suppose we are given $n$ pairwise disjoint intervals on $\ell$ and we want to build a facility on each interval. As the facilities can interfere with each other if they are too close (e.g., if the facilities are hazardous), the goal is to choose locations for these facilities such that the minimum pairwise distance among these facilities is minimized. Clearly, this is an instance of our problem.


We also consider the {\em cycle version} of the problem where the
intervals of $\calI$ are given on a cycle $\calC$. The intervals of
$\calI$ are also pairwise disjoint and are given sorted cyclically on
$\calC$. Note that the distance
of two points on $\calC$ is the length of the shorter arc of $\calC$
between the two points. By making use of our ``line version''
algorithm, we solve this cycle version problem in linear time as well.

\subsection{Related Work}

To the best of our knowledge, we have not found any previous work on
the two problems studied in this paper. Our problems essentially belong to a family of geometric dispersion problems, which are NP-hard in general in two and higher dimensional space.
For example, Baur and Fekete~\cite{ref:BaurAp01} studied the problems of distributing a number of points within a polygonal region such that the points are dispersed far away from each other, and they showed that the problems cannot be approximated arbitrarily well in polynomial time, unless P=NP.

Wang and Kuo \cite{ref:WangA88} considered the following two problems. Given a
set $S$ of points and a value $d$, find a largest subset of $S$ in which the
distance of any two points is at least $d$. Given a
set $S$ of points and an integer $k$, find a subset of $k$ points of $S$ to
maximize the minimum distance of all pairs of points in the subset.
It was shown in \cite{ref:WangA88}
that both problems in 2D are NP-hard but can be
solved efficiently in 1D. Refer to
\cite{ref:BenkertA09,ref:ErkutTh90,ref:FowlerOp81,ref:FurediTh91,ref:MaranasNe95} for other geometric
dispersion problems.
Dispersion problems in various non-geometric settings were also considered
\cite{ref:FernandezTh13,ref:JagerSo07,ref:ProkopyevTh09,ref:RaviFa91,ref:RaviHe94}.
These problems are in general NP-hard; approximation and heuristic algorithms
were proposed for them.

On the other hand, problems on intervals usually have applications in other
areas. For example, some problems on intervals are related to scheduling
because the time period between the release time and the deadline of a
job or task in scheduling problems can be considered as an interval
on the line. From the interval point of view,
Garey et al. \cite{ref:GareySc81} studied the following problem on
intervals: Given $n$ intervals on a line, determine whether it is
possible to find a unit-length sub-interval in each input interval, such
that these sub-intervals do not intersect.
An $O(n\log n)$ time algorithm was given in \cite{ref:GareySc81} for this
problem. The optimization version of the above problem was also studied
\cite{ref:ChrobakOn07,ref:VakhaniaA13}, where the goal is to
find a maximum number of intervals that contain non-intersecting
unit-length sub-intervals. Chrobak et al.~\cite{ref:ChrobakOn07}
gave an $O(n^5)$ time algorithm for the problem, and later Vakhania
\cite{ref:VakhaniaA13} improved the algorithm to $O(n^2\log n)$ time.
The online version of the problem was also considered
\cite{ref:ChrobakA06}.
Other optimization problems on intervals have also been considered, e.g., see
\cite{ref:GareySc81,ref:LangSc76,ref:SimonsA78,ref:VakhaniaMi13}.



\subsection{Our Approaches}
\label{sec:approach}
For the line version of the problem, our algorithm is based on a greedy strategy.
We consider the intervals of $\calI$ incrementally from left to right,
and for each interval, we will ``temporarily'' determine a point in the interval.
During the algorithm, we maintain a value $d_{\min}$, which is the
minimum pairwise distance of the ``temporary'' points that so far have been
computed.  Initially, we put a point at the left endpoint of the first
interval and set $d_{\min}=\infty$. During the algorithm, the value
$d_{\min}$ will be monotonically decreasing. In general, when the next
interval is considered, if it is possible to put a point in the
interval without decreasing $d_{\min}$, then we put such a point
as far left as possible. Otherwise, we put a point on the right endpoint
of the interval. In the latter case, we also need to adjust the points
that have been determined temporarily in the previous intervals that
have been considered. We adjust these points in a greedy way such that
$d_{\min}$ decreases the least. A straightforward implementation of this
approach can only give an $O(n^2)$ time algorithm. In order to achieve
the $O(n)$ time performance, during the algorithm we maintain a
``critical list'' $\scrL$ of intervals, which is a subset of intervals
that have been considered. This list has some properties that
help us implement the algorithm in $O(n)$ time.

We should point out that our algorithm is fairly simple and easy to
implement. In contrast, the rationale of the idea is quite involved and
it is not an easy task to argue its correctness.
Indeed, discovering the critical list is the most challenging work and it
is the key idea for solving the problem in linear time.



To solve the cycle version, the main idea is to
convert the problem to a problem instance on a line and then
apply our line version algorithm.
More specifically,
we make two copies of the intervals of $\calI$ to a line
and then apply our line version algorithm on these $2n$ intervals on the line. The line
version algorithm will find $2n$ points in these intervals and we show that
a particular subset of $n$ consecutive points of them correspond
to an optimal solution for the original problem on $\calC$.

In the following, we will present our algorithms for the line version
in Section \ref{sec:alg.line}. The cycle version is discussed in
Section \ref{sec:cycle}. Section \ref{sec:conclusion} concludes.

\section{The Line Version}
\label{sec:alg.line}

Let $\calI=\{I_1,I_2,\ldots,I_n\}$ be the set of intervals sorted
from left to right on $\ell$. For any two
points of $p$ and $q$ on $\ell$, we use $|pq|$ to denote their
distance. Our goal is to
find a point $p_i$ in $I_i$ for each $1\leq i\leq n$, such that the
minimum pairwise distance of these points, i.e., $\min_{1\leq i< j\leq
n}|p_ip_j|$, is maximized.

For each interval $I_i$, $1\leq i\leq n$, we use $l_i$ and $r_i$ to
denote its left and right endpoints, respectively.
We assume $\ell$ is the $x$-axis. With a little abuse
of notation, for any point $p\in \ell$, depending on the context, $p$
may also refer to its coordinate on $\ell$.
Therefore, for each $1\leq i\leq n$, it is required that $l_i\leq
p_i\leq r_i$.

For simplicity of discussion, we make a general position assumption
that no two endpoints of the intervals of $\calI$ have the same
location (our algorithm can be easily extended to the general case). Note that this implies $l_i<r_i$ for any interval $I_i$.

The rest of this section is organized as follows.
In Section \ref{sec:obser}, we discuss some  observations.
In Section \ref{sec:overview}, we give an overview of our algorithm.
The details of the algorithm are presented in Section
\ref{sec:details}. Finally,
we discuss the correctness and analyze the running time in Section
\ref{sec:correct}.

\subsection{Observations}
\label{sec:obser}


Let $P=\{p_1,p_2,\ldots,p_n\}$ be the set of sought points. Since all intervals
are disjoint, $p_1<p_2<\ldots<p_n$.
Note that the minimum pairwise distance of the points of $P$ is also the minimum
distance of all pairs of adjacent points.

Denote by $d_{opt}$ the minimum pairwise distance of $P$ in an
optimal solution, and $d_{opt}$ is called the {\em optimal objective value}.
We have the following lemma.

\begin{lemma}\label{lem:10}
$d_{opt}\leq\frac{r_j - l_i}{j-i}$ for any $1\leq i<j\leq n$.
\end{lemma}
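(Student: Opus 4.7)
The plan is to prove this bound by a direct averaging argument on the optimal solution. Let $P=\{p_1,p_2,\ldots,p_n\}$ denote any optimal placement achieving the minimum pairwise distance $d_{opt}$, where $p_k \in I_k$ for each $k$. As already observed, since the intervals are disjoint and sorted, the points satisfy $p_1 < p_2 < \cdots < p_n$, so the minimum pairwise distance equals the minimum gap between consecutive points.

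First I would fix arbitrary indices $1 \leq i < j \leq n$ and derive a lower bound on $p_j - p_i$. Since every consecutive gap is at least $d_{opt}$, a telescoping sum gives
\[
p_j - p_i \;=\; \sum_{k=i}^{j-1}(p_{k+1}-p_k) \;\geq\; (j-i)\,d_{opt}.
\]
Next I would derive a matching upper bound on $p_j - p_i$ using the interval constraints: because $p_i \in I_i$ and $p_j \in I_j$, we have $p_i \geq l_i$ and $p_j \leq r_j$, hence
\[
p_j - p_i \;\leq\; r_j - l_i.
\]
Combining the two inequalities yields $(j-i)\,d_{opt} \leq r_j - l_i$, and dividing by the positive integer $j-i$ gives the claim.

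There is no real obstacle here; the lemma is essentially a pigeonhole/averaging observation about how much ``room'' the interval from $l_i$ to $r_j$ provides for the $j-i+1$ points $p_i,\ldots,p_j$ that must fit inside it with consecutive spacing at least $d_{opt}$. The only points worth noting explicitly are that we use the assumption that the intervals are sorted (so that $p_i$ and $p_j$ are in the correct order) and that $d_{opt}$ bounds \emph{every} consecutive gap, not just some. This lemma will be useful later as a ready-made upper bound on the optimal objective value, since it implies in particular that $d_{opt} \leq \min_{i<j}\frac{r_j - l_i}{j-i}$.
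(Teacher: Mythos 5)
Your proof is correct and uses essentially the same argument as the paper: the telescoping lower bound $p_j - p_i \geq (j-i)\,d_{opt}$ combined with the upper bound $p_j - p_i \leq r_j - l_i$ from the interval constraints. The paper merely phrases it as a proof by contradiction, while you state it directly; the content is identical.
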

\begin{proof}
Assume to the contrary that this is not true. Then there exist $i$ and
$j$ with $i<j$ such that $d_{opt}>\frac{r_j - l_i}{j-i}$. Consider any
optimal solution OPT.
Note that in OPT, $p_i, p_{i+1}, \ldots, p_j$ are located in the intervals
$I_i,I_{i+1},\ldots,I_j$, respectively,  and $|p_ip_j|\geq d_{opt}\cdot (j-i)$.
Hence, $|p_ip_j|>r_j-l_i$.  On the other hand, since
$l_i\leq p_i$ and $p_j\leq r_j$, it holds that $|p_ip_j|\leq
r_j - l_i$. We thus obtain contradiction.  \qed
\end{proof}

The preceding lemma leads to the following corollary.

\begin{corollary}\label{coro:10}
Suppose we find a solution (i.e., a way to place the points of $P$) in which the minimum
pairwise distance of $P$ is equal to $\frac{r_j-l_i}{j-i}$ for some
$1\leq i<j\leq n$. Then the solution is an optimal solution.
\end{corollary}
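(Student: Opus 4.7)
The plan is to derive the corollary directly from Lemma~\ref{lem:10}, using only the definition of $d_{opt}$ as the maximum achievable minimum pairwise distance. The argument should be a two-line sandwich: one inequality from Lemma~\ref{lem:10}, and the reverse inequality from optimality of $d_{opt}$.

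More concretely, let $d$ denote the minimum pairwise distance in the given solution, so that by hypothesis $d = \frac{r_j - l_i}{j-i}$ for some specific pair $i < j$. First I would invoke Lemma~\ref{lem:10} with this particular pair $(i,j)$ to conclude $d_{opt} \leq \frac{r_j - l_i}{j-i} = d$. Next, since $d_{opt}$ is the \emph{maximum} minimum pairwise distance over all feasible ways of placing the points, and the given solution is itself a feasible placement achieving value $d$, we must have $d \leq d_{opt}$. Combining the two inequalities yields $d = d_{opt}$, which means the exhibited solution attains the optimum and is therefore optimal.

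There is essentially no obstacle here; the only thing to be careful about is to make the direction of optimization explicit (we are maximizing the minimum pairwise distance), so that the inequality $d \leq d_{opt}$ is justified without circularity. I would also briefly note that the hypothesis tacitly assumes the exhibited placement is feasible (a point in each $I_i$), since otherwise the comparison with $d_{opt}$ would not make sense. Beyond these bookkeeping remarks, the proof is a one-step consequence of Lemma~\ref{lem:10}.
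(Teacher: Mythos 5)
Your proof is correct and is exactly the argument the paper intends: the paper states the corollary as an immediate consequence of Lemma~\ref{lem:10}, and your two-inequality sandwich ($d_{opt}\leq \frac{r_j-l_i}{j-i}=d$ from the lemma, $d\leq d_{opt}$ from feasibility and the definition of $d_{opt}$ as a maximum) is the standard way to make that explicit. No gaps.
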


Our algorithm will find such a solution as stated in
the corollary.

\subsection{The Algorithm Overview}
\label{sec:overview}

Our algorithm will consider and process the intervals of $\calI$
one by one from left to right. Whenever an interval $I_i$ is processed, we will
``temporarily'' determine $p_i$ in $I_i$. We say ``temporarily''
because later the algorithm may change the location of $p_i$. During
the algorithm, a value $d_{\min}$ and two indices $i^*$ and $j^*$ will be maintained
such that $d_{\min}=(r_{j^*}-l_{i^*})/(j^*-i^*)$ always holds.

Initially, we set $p_1=l_1$ and $d_{\min}=\infty$, with $i^*=j^*=1$.
In general, suppose the first
$i-1$ intervals have been processed; then $d_{\min}$ is equal to the
minimum pairwise distance of the points $p_1,p_2,\ldots,p_{i-1}$, which have
been temporarily determined. In fact, $d_{\min}$ is the optimal
objective value for the sub-problem on the first $i-1$ intervals.
During the execution of algorithm, $d_{\min}$ will be
monotonically decreasing. After all intervals are processed, $d_{\min}$ is
$d_{opt}$. When we process the next interval $I_i$, we
temporarily determine $p_i$ in a greedy manner as follows.
If $p_{i-1}+d_{\min}\leq l_i$, we put $p_i$ at $l_i$.
If $l_i<p_{i-1}+d_{\min}\leq  r_i$, we put $p_i$ at $p_{i-1}+d_{\min}$.
If $p_{i-1}+d_{\min}>  r_i$, we put $p_i$ at $r_i$.
In the first two cases, $d_{\min}$ does not change.
In the third case, however, $d_{\min}$ will
decrease. Further, in the third case, in order to make the decrease of $d_{\min}$ as
small as possible, we need to move
some points of $\{p_1,p_2,\ldots,p_{i-1}\}$ leftwards. By a straightforward
approach, this moving procedure can be done in $O(n)$ time. But
this will make the entire algorithm run in $O(n^2)$ time.

To have any hope of obtaining an $O(n)$ time algorithm, we need to perform the above
moving ``implicitly'' in $O(1)$ amortized time. To this end, we need
to find a way to answer the following question: Which points of
$p_1,p_2,\ldots,p_{i-1}$ should move leftwards and how far should they move? To
answer the question, the crux of our
algorithm is to maintain a ``critical list'' $\scrL$ of interval indices, which
bears some important properties that eventually help us
implement our algorithm in $O(n)$ time.

In fact, our algorithm is fairly simple. The most
``complicated'' part is to use a linked list to store $\scrL$ so that the following
three operations on $\scrL$ can be performed in constant time each: remove the
front element; remove the rear element; add a new element to the
rear. Refer to Algorithm \ref{algo:line} for the pseudocode.

Although the algorithm is simple, the rationale of the idea is rather
involved and it is also not obvious to see the correctness.
Indeed, discovering the critical list is the most challenging task and the
key idea for designing our linear time
algorithm.
To help in understanding and give some intuition, 
below we use an example of only three intervals to illustrate how the algorithm
works.

Initially, we set $p_1=l_1$, $d_{\min}=\infty$, $i^*=j^*=1$, and $\scrL=\{1\}$.

To process $I_2$, we first try to put $p_2$ at $p_1+d_{\min}$. Clearly,
$p_1+d_{\min}>r_2$. Hence, we put $p_2$ at $r_2$. Since $p_1$ is
already at $l_1$, which is the leftmost point of $I_1$,
we do not need to move it. We update $j^*=2$ and $d_{\min}=r_2-l_1$.
Finally, we add $2$ to the rear of $\scrL$. This finishes the
processing of $I_2$.

Next we process $I_3$.  We try to put $p_3$ at $p_2+d_{\min}$.
Depending on whether
$p_2+d_{\min}$ is to the left of $I_3$, in $I_3$, or to the right of
$I_3$, there are three cases (e.g., see Fig.~\ref{fig:threeintervals}).

\begin{figure}[t]
\begin{minipage}[t]{\textwidth}
\begin{center}
\includegraphics[height=0.6in]{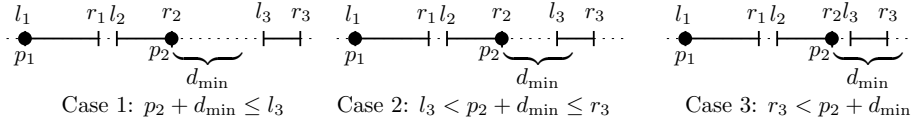}
\caption{\footnotesize Illustrating the three cases when $I_3$ is
being processed.}
\label{fig:threeintervals}
\end{center}
\end{minipage}
\vspace{-0.15in}
\end{figure}

\begin{enumerate}
\item
If $p_2+d_{\min}\leq l_3$, we set $p_3=l_3$. We reset $\scrL$ to
$\{3\}$. None of $d_{\min}$, $i^*$, and $j^*$ needs to be changed in this case.


\item
If $l_3< p_2+d_{\min}\leq r_3$, we set $p_3=p_2+d_{\min}$.
None of $d_{\min}$, $i^*$, and $j^*$ needs to be changed.
Further, the critical list $\scrL$ is updated as follows.

We first give some ``motivation'' on why we need to update $\scrL$.
Assume later in the algorithm, say, when we process the next interval,
we need to move both $p_2$ and $p_3$ leftwards simultaneously so that
$|p_1p_2|=|p_2p_3|$ during the moving (this is for making $d_{\min}$ as large as possible).
The moving procedure stops once either $p_2$ arrives at $l_2$ or $p_3$
arrives at $l_3$. To determine which case happens first, it
suffices to determine whether $l_2-l_1 > \frac{l_3-l_1}{2}$.

\begin{enumerate}
\item
If $l_2-l_1 > \frac{l_3-l_1}{2}$, then $p_2$ will arrive at $l_2$ first, after
which $p_2$ cannot move
leftwards any more in the rest of the algorithm but $p_3$ can still move leftwards.

\item
Otherwise, $p_3$ will arrive at $l_3$ first, after which $p_3$ cannot
move leftwards any more. However, although $p_2$ can still move leftwards,
doing that would not help in making $d_{\min}$ larger.
\end{enumerate}


We therefore update $\scrL$ as follows.
If $l_2-l_1 > \frac{l_3-l_1}{2}$, we add $3$ to the rear of $\scrL$.
Otherwise, we first remove $2$ from the rear of $\scrL$ and then add $3$ to the rear.

\item
If $r_3< p_2+d_{\min}$, we set $p_3=r_3$. Since
$|p_2p_3|<d_{\min}$, $d_{\min}$ needs to be decreased. To make
$d_{\min}$ as large as possible, we will move $p_2$
leftwards until either $|p_1p_2|$ becomes equal to $|p_2p_3|$ or
$p_2$ arrives at $l_2$. To determine which event happens first, we only
need to check whether $l_2-l_1>\frac{r_3-l_1}{2}$.

\begin{enumerate}
\item
If $l_2-l_1>\frac{r_3-l_1}{2}$, the latter event happens first. We set
$p_2=l_2$ and update $d_{\min}=r_3-l_2$ ($=|p_2p_3|$), $i^*=2$, and $j^*=3$. Finally, we
remove $1$ from the front of $\scrL$
and add $3$ to the rear of $\scrL$, after which $\scrL=\{2,3\}$.

\item
Otherwise, the former event happens first. We set
$p_2=l_1+\frac{r_3-l_1}{2}$ and update $d_{\min}=(r_3-l_1)/2$ ($=|p_1p_2|=|p_2p_3|$) and $j^*=3$ ($i^*$ is still
$1$). Finally, we update $\scrL$ in the same way as the above
second case. Namely, if $l_2-l_1>\frac{l_3-l_1}{2}$, we add $3$ to
the rear of $\scrL$; otherwise, we remove $2$ from $\scrL$ and add
$3$ to the rear.
\end{enumerate}

\end{enumerate}

One may verify that in any case the above obtained $d_{\min}$ is an
optimal objective value for the three intervals.

As another example,
Fig.~\ref{fig:figure1} illustrates the solution found by our
algorithm on six intervals.

\begin{figure}[t]
\begin{minipage}[t]{\textwidth}
\begin{center}
\includegraphics[width=\textwidth]{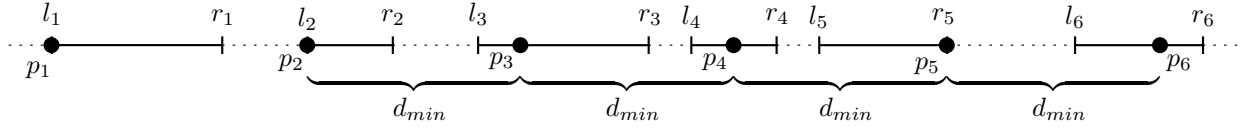}
\caption{\footnotesize Illustrating the solution computed by our
algorithm, with $i^*=2$ and $j^*=5$.}
\label{fig:figure1}
\end{center}
\end{minipage}
\end{figure}

\subsection{The Algorithm}
\label{sec:details}

We are ready to present the details of our algorithm.
For any two indices $i<j$, let $P(i,j)=\{p_i,p_{i+1},\ldots,p_j\}$.

Initially we set
$p_1=l_1$, $d_{\min}=\infty$, $i^*=j^*=1$, and $\scrL=\{1\}$.
Suppose interval $i-1$ has just been processed for some $i>1$.
Let the current critical list be $\scrL=\{k_s,k_{s+1},\ldots k_t\}$ with $1\leq
k_s< k_{s+1}< \cdots < k_t\leq i-1$, i.e., $\scrL$ consists
of $t-s+1$ sorted indices in $[1,i-1]$.
Our algorithm maintains the following {\em invariants}.

\begin{enumerate}
\item
The ``temporary'' location of $p_{i-1}$ is known.

\item

$d_{\min}=(r_{j^*}-l_{i^*})/(j^*-i^*)$ with $1\leq i^*\leq j^*\leq i-1$.

%

\item
$k_t=i-1$.

\item
$p_{k_s}=l_{k_s}$, i.e., $p_{k_s}$ is at the left endpoint of the interval $I_{k_s}$.
\item
The locations of all points of $P(1,k_s)$ have been explicitly
computed and {\em finalized} (i.e., they will never be changed in the later
algorithm).
\item
For each $1\leq j\leq k_s$, $p_j$ is in $I_j$.
\item
The distance of every pair of adjacent points of $P(1,k_s)$ is at least $d_{\min}$.

\item
For each $j$ with $k_s+1\leq j\leq i-1$, $p_j$ is ``implicitly'' set to
$l_{k_s}+d_{\min}\cdot (j-k_s)$ and $p_j\in I_j$. In other words, the
distance of every pair of adjacent points of $P(k_s,i-1)$ is exactly $d_{\min}$.

%
%
\item
The critical list $\scrL$ has the following {\em priority property}: If $\scrL$ has
more than one element (i.e., $s<t$), then
for any $h$ with $s\leq h\leq t-1$, Inequality \eqref{equ:10} holds for any
$j$ with $k_h+1\leq j\leq i-1$ and $j\neq k_{h+1}$.
\begin{equation}\label{equ:10}
\frac{l_{k_{h+1}}-l_{k_{h}}}{k_{h+1}-k_{h}} >
\frac{l_j-l_{k_{h}}}{j-k_{h}}.
\end{equation}

We give some intuition on what the priority property implies.
Suppose we move all points in $P(k_s+1,{i-1})$ leftwards simultaneously
such that the distances between all adjacent pairs of points
of $P(k_s,i-1)$ keep the same (by the above eighth invariant, they are
the same before the moving).
Then, Inequality \eqref{equ:10} with $h=s$ implies that
$p_{k_{s+1}}$ is the first point of $P(k_s+1,{i-1})$ that arrives at the left
endpoint of its interval. Once $p_{k_{s+1}}$ arrives at the interval
left endpoint, suppose we continue to move the points of
$P(k_{s+1}+1,{i-1})$ leftwards simultaneously
such that the distances between all adjacent pairs of points
of $P(k_{s+1},i-1)$ are the same. Then,
Inequality \eqref{equ:10} with $h=s+1$ makes sure that
$p_{k_{s+2}}$ is the first point of $P(k_{s+1}+1,{i-1})$ that arrives at the left
endpoint of its interval. Continuing the above can explain the
inequality for $h=s+2,s+3,\ldots,t-1$.

%
%

The priority property further leads to the following observation.

\begin{observation}\label{obser:10}
For any $h$ with $s\leq h\leq t-2$, the following holds:
$$\frac{l_{k_{h+1}}-l_{k_{h}}}{k_{h+1}-k_{h}}>
\frac{l_{k_{h+2}}-l_{k_{h+1}}}{k_{h+2}-k_{h+1}}.$$
\end{observation}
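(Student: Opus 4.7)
The plan is to derive the observation directly from the priority property by instantiating Inequality \eqref{equ:10} at a carefully chosen index, and then reducing the resulting inequality via the standard mediant identity. Fix $h$ with $s \leq h \leq t-2$. I would substitute $j = k_{h+2}$ into Inequality \eqref{equ:10}; this is legal because $\scrL$ is strictly increasing, so $k_h+1 \leq k_{h+2} \leq k_t = i-1$ and $k_{h+2} \neq k_{h+1}$. The priority property then yields
\[
\frac{l_{k_{h+1}}-l_{k_{h}}}{k_{h+1}-k_{h}} > \frac{l_{k_{h+2}}-l_{k_{h}}}{k_{h+2}-k_{h}}.
\]

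The next step is to turn this into the claimed inequality by the elementary mediant fact that, for positive reals $b,d$, one has $\frac{a}{b} > \frac{a+c}{b+d}$ if and only if $\frac{a}{b} > \frac{c}{d}$ (both being equivalent to $ad > bc$). Setting
\[
a = l_{k_{h+1}}-l_{k_h},\quad b = k_{h+1}-k_h,\quad c = l_{k_{h+2}}-l_{k_{h+1}},\quad d = k_{h+2}-k_{h+1},
\]
the denominators $b,d$ are positive integers because the $k_j$'s are strictly increasing, and $a+c = l_{k_{h+2}}-l_{k_h}$, $b+d = k_{h+2}-k_h$. Hence the displayed inequality above is exactly $\frac{a}{b} > \frac{a+c}{b+d}$, which is equivalent to $\frac{a}{b} > \frac{c}{d}$, i.e.,
\[
\frac{l_{k_{h+1}}-l_{k_{h}}}{k_{h+1}-k_{h}} > \frac{l_{k_{h+2}}-l_{k_{h+1}}}{k_{h+2}-k_{h+1}},
\]
as desired.

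There is no real obstacle here: the priority property already encodes the nontrivial content, and all that is needed is to notice that $k_{h+2}$ is a valid choice of $j$ and that the inequality among ``slopes from $l_{k_h}$'' transfers to an inequality between consecutive ``segment slopes'' by the mediant identity. The only sanity check is the positivity of the denominators and the admissibility of $j = k_{h+2}$, both of which follow immediately from $s \leq h \leq t-2$ and the strict monotonicity of the indices in $\scrL$.
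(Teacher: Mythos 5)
Your proof is correct and follows essentially the same route as the paper: instantiate the priority property at $j=k_{h+2}$ and then convert the resulting ``slopes from $l_{k_h}$'' inequality into the consecutive-segment inequality via the mediant fact (the paper states this fact in the equivalent form $\frac{a}{b}>\frac{c}{d}\Rightarrow\frac{a}{b}>\frac{c-a}{d-b}$). Your version of the mediant step is if anything slightly cleaner, since it only needs positivity of the denominators.
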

\begin{proof}
Note that $k_h+1\leq k_{h+1}< k_{h+2}\leq i-1$.
Let $j=k_{h+2}$.
By Inequality \eqref{equ:10}, we have
\begin{equation}\label{equ:20}
\frac{l_{k_{h+1}}-l_{k_{h}}}{k_{h+1}-k_{h}} >
\frac{l_{k_{h+2}}-l_{k_{h}}}{k_{h+2}-k_{h}}.
\end{equation}

Note that for any four positive numbers $a,b,c,d$ such that $a<c$, $b<d$, and
$\frac{a}{b}>\frac{c}{d}$, it holds that $\frac{a}{b}>\frac{c-a}{d-b}$.
Applying this to Inequality \eqref{equ:20} will obtain the
observation.\qed
\end{proof}
\end{enumerate}

\paragraph{Remark.} By Corollary \ref{coro:10}, Invariants (2), (6),
(7), and (8) together imply that $d_{\min}$ is the optimal objective value for
the sub-problem on the first $i-1$ intervals.

One may verify that initially after $I_1$ is processed,
all invariants trivially hold (we finalize $p_1$ at $l_1$).
In the following we describe the general step of our algorithm to
process the interval $I_i$. We will also show that  all algorithm invariants hold
after $I_i$ is processed.

Depending on whether $p_{i-1}+d_{\min}$ is to the left of
$I_i$, in $I_i$, or to the right of $I_i$, there are three cases.

\subsubsection{The case $p_{i-1}+d_{\min}\leq l_i$}

In this case,
$p_{i-1}+d_{\min}$ is to the left of $I_i$. We set $p_i=l_i$
and finalize it. We do not change $d_{\min}$,
$i^*$, or $j^*$. Further, for each $j\in [k_s+1,i-1]$, we explicitly
compute $p_j=l_{k_s}+d_{\min}\cdot (j-k_s)$ and finalize it.
Finally, we reset $\scrL=\{i\}$.

\begin{lemma}\label{lem:20}
In the case $p_{i-1}+d_{\min}\leq l_i$,
all algorithm invariants hold after $I_i$ is processed.
\end{lemma}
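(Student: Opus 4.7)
The plan is to verify each of the nine invariants in turn after processing $I_i$. Most follow directly from the corresponding invariants that held before $I_i$ was processed, together with the fact that in this case the update is very simple: we just materialize the implicit locations of $p_{k_s+1},\ldots,p_{i-1}$, finalize them, finalize $p_i=l_i$, and reset $\scrL=\{i\}$. In what follows, let $k_s$ denote the front of $\scrL$ \emph{before} the update (so the new front is $i$).

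First I would dispatch the easy invariants. Invariant (1) is immediate because we explicitly set $p_i=l_i$. Invariant (2) holds because $d_{\min}$, $i^*$, $j^*$ are not touched. Invariants (3) and (4) follow because the new $\scrL$ is $\{i\}$ and we set $p_i=l_i$, so the new front (and rear) is $i=k_s^{\text{new}}=k_t^{\text{new}}$ with $p_i=l_i$. For invariant (5), the old invariant (5) finalized $P(1,k_s)$; the algorithm now finalizes $p_{k_s+1},\ldots,p_{i-1}$ and $p_i$, so all of $P(1,i)$ is finalized, matching the new $k_s^{\text{new}}=i$. For invariant (6), the newly finalized points $p_j=l_{k_s}+d_{\min}\cdot(j-k_s)$ for $j\in[k_s+1,i-1]$ are in $I_j$ by the old invariant (8), and $p_i=l_i\in I_i$ by definition.

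The one invariant that genuinely uses the case hypothesis is (7). The adjacent pairs in the new $P(1,i)$ split into three groups. Pairs inside $P(1,k_s)$ are at distance $\geq d_{\min}$ by the old invariant (7). Pairs inside $P(k_s,i-1)$ are at distance exactly $d_{\min}$ by the old invariant (8), since the materialized values match the old implicit definition. The only new adjacent pair is $(p_{i-1},p_i)$, and here the case hypothesis $p_{i-1}+d_{\min}\leq l_i$ gives $p_i-p_{i-1}=l_i-p_{i-1}\geq d_{\min}$, which is exactly what we need.

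Finally, invariants (8) and (9) are vacuous after the update: with $\scrL=\{i\}$ we have $s=t$ and $k_s^{\text{new}}=i$, so the index set $[k_s^{\text{new}}+1,i-1]$ in (8) is empty and there is no $h\in[s,t-1]$ in (9). The main (and essentially only) obstacle is the verification of (7): one must notice that the case hypothesis is designed precisely so that freezing the implicit spacing of $d_{\min}$ on $[k_s,i-1]$ and appending $p_i=l_i$ still leaves all adjacent distances at least $d_{\min}$, which is what lets us safely finalize the whole prefix and start a fresh critical list.
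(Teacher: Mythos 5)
Your proof is correct and follows essentially the same route as the paper's: verify each of the nine invariants in turn, inheriting most of them from the pre-processing invariants, using old Invariant (8) to place the newly finalized points in their intervals and the case hypothesis $p_{i-1}+d_{\min}\leq l_i$ to get $|p_{i-1}p_i|\geq d_{\min}$. Your treatment of Invariant (7) is in fact a little more explicit than the paper's (which only notes that $|p_{i-1}p_i|\geq d_{\min}$ suffices), but the substance is identical.
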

\begin{proof}
Recall that $\scrL=\{i\}$ after $I_i$ is processed. Hence, $k_s=k_t=i$. For the sake
of differentiation, we use $\scrL'=\{k_s',k_{s+1}',\ldots,k'_{t'}\}$
to denote the critical list before we process $I_i$.

\begin{enumerate}

\item

Since $p_i$ is known, Invariant (1) hold.

\item

For Invariant (2), since the same invariant holds before we process $I_i$ and
none of $d_{\min}$, $i^*$, and $j^*$ is changed when we process $I_i$,
Invariant (2) trivially holds after we process $I_i$.

\item
Since $k_t=i$, the third invariant holds.

\item
Recall that $p_{k_s}=p_i=l_i$, which is the fourth invariant.

\item
To prove Invariant (5), since the same invariant holds before $I_i$ is processed,
it is sufficient to show that the points of $P(k_s'+1,i)$ have
been explicitly computed and finalized in the step of processing
$I_i$, which is clearly true according to our algorithm.

\item
To prove Invariant (6), since the same invariant holds  before $I_i$ is processed,
it is sufficient to show that each point
$p_j$ of $P(k_s'+1,i)$ is in $I_j$.

Indeed, consider any $j\in [k_s'+1,i]$.
If $j=i$, then since $p_j=l_j$, it is true that $p_j$ is in $I_j$.
If $j<i$, then by Invariant (8) of $\scrL'$,
$l_{k_s'}+d_{\min}\cdot (j-k_s')$ is in $I_j$. According to our algorithm,
in the step of processing $I_i$, $p_j$ is explicitly set to
$l_{k_s'}+d_{\min}\cdot (j-k_s')$. Hence, $p_j$ is in $I_j$.

\item
To prove Invariant (7), since the same invariant holds  before $I_i$ is processed,
it is sufficient to show that $|p_{i-1}p_i|\geq d_{\min}$, which
is clearly true according to our algorithm.

\item
Invariant (8) trivially holds since $k_s+1>i$ (i.e., there is no
$j$ such that $k_s+1\leq j\leq i$).

\item
Invariant (9) also holds since $\scrL$ has only one element.
\end{enumerate}

This proves that all algorithm invariants hold. The lemma thus
follows. \qed
\end{proof}

\subsubsection{The case $l_i<p_{i-1}+d_{\min}\leq r_i$}

In this case, $p_{i-1}+d_{\min}$ is in $I_i$. We set $p_i=p_{i-1}+d_{\min}$.
We do not change $d_{\min}$, $i^*$, or $j^*$.
We update the critical list $\scrL$ by the following {\em rear-processing
procedure} (because the elements of $\scrL$ are considered from the rear to the
front).

If $s=t$, i.e., $\scrL$ only has one element, then we simply add $i$ to the rear of $\scrL$.
Otherwise, we first check whether the following inequality is true.
\begin{equation}\label{equ:30}
\frac{l_{k_{t}}-l_{k_{t-1}}}{k_{t}-k_{t-1}} > \frac{l_{i}-l_{k_{t-1}}}{i-k_{t-1}}.
\end{equation}

If it is true, then we add $i$ to the end of $\scrL$.

If it is not true, then we remove $k_t$ from $\scrL$ and decrease $t$ by $1$. Next,
we continue to check whether Inequality \eqref{equ:30} (with the
decreased $t$) is true and follow the same procedure until
either the inequality becomes true or $s=t$. In either case, we add $i$ to the end of $\scrL$.
Finally, we increase $t$ by $1$ to let $k_t$ refer to $i$.

This finishes the rear-processing procedure for updating $\scrL$.


\begin{lemma}\label{lem:30}
In the case $l_i< p_{i-1}+d_{\min}\leq r_i$,
all algorithm invariants hold after $I_i$ is processed.
\end{lemma}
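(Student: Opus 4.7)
The plan is to verify each of the nine invariants after $I_i$ is processed. Because $d_{\min}$, $i^*$, $j^*$, and the finalized prefix $P(1,k_s)$ are untouched in this case, Invariants (2) and (4)--(7) carry over from their old counterparts for $\scrL'$ (the critical list before processing), with the upper bound $i-1$ replaced by $i$. Invariant (1) holds because $p_i = p_{i-1}+d_{\min}$ is explicitly set, and Invariant (3) holds because the rear-processing concludes by appending $i$ as the new $k_t$. For Invariant (8), the old relation $p_{i-1} = l_{k_s} + d_{\min}\cdot(i-1-k_s)$ combined with $p_i = p_{i-1}+d_{\min}$ gives $p_i = l_{k_s}+d_{\min}\cdot(i-k_s)$, and the case hypothesis $l_i < p_{i-1}+d_{\min}\leq r_i$ places $p_i$ in $I_i$; every other implicit point is unchanged.

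The substantive step is Invariant (9). Write $\scrL' = \{a_1,\ldots,a_q\}$ and suppose the rear-processing retains the prefix $\{a_1,\ldots,a_m\}$ before appending $i$, so the new list is $\scrL = \{a_1,\ldots,a_m,i\}$. Two things must be checked: (II) Inequality \eqref{equ:10} for the brand-new pair $(a_m,i)$ against every admissible $j \in [a_m+1,i-1]$; and (I) the extension of Inequality \eqref{equ:10} to the new endpoint $j=i$ for each inherited pair $(a_h,a_{h+1})$ with $1 \leq h \leq m-1$.

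For (II), each removal in the rear-processing certifies the failure of \eqref{equ:30}; in particular, for the first removed element $a_{m+1}$, we have $\frac{l_{a_{m+1}} - l_{a_m}}{a_{m+1}-a_m} \leq \frac{l_i - l_{a_m}}{i - a_m}$. Combined with the old priority property at $h=m$, namely $\frac{l_j - l_{a_m}}{j - a_m} < \frac{l_{a_{m+1}} - l_{a_m}}{a_{m+1}-a_m}$ for every $j\in[a_m+1,i-1]$ with $j\neq a_{m+1}$, chaining gives Inequality \eqref{equ:10} for the new pair. The boundary $j=a_{m+1}$ is covered by the general position assumption, and the degenerate case $m=1$ (total collapse to $\{a_1,i\}$) is handled by the same pattern with $a_2$ in place of $a_{m+1}$.

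For (I), the termination condition of the rear-processing (when $s<t$) is that \eqref{equ:30} holds at the top of the stack: $\frac{l_{a_m} - l_{a_{m-1}}}{a_m - a_{m-1}} > \frac{l_i - l_{a_{m-1}}}{i - a_{m-1}}$, which is Inequality \eqref{equ:10} at $h=m-1$ for $j=i$. Downward induction on $h$ handles the remaining pairs: combining Observation \ref{obser:10} applied to $\scrL'$ (consecutive slopes in $\scrL'$ strictly decrease) with the inductive hypothesis at $h+1$ gives $\frac{l_{a_{h+1}} - l_{a_h}}{a_{h+1}-a_h} > \frac{l_i - l_{a_{h+1}}}{i - a_{h+1}}$ by transitivity, and one application of the mediant identity $\frac{a}{b}>\frac{c}{d}\Rightarrow \frac{a}{b}>\frac{a+c}{b+d}$ extends this to $\frac{l_{a_{h+1}} - l_{a_h}}{a_{h+1}-a_h} > \frac{l_i - l_{a_h}}{i - a_h}$, as required. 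The main obstacle is the bookkeeping in (9): information from the removed elements of $\scrL'$ (part (II)) and from the termination condition of the rear-processing (part (I)) must be meshed with the old priority property for $\scrL'$, but all pointwise inequalities reduce to applications of the mediant identity, so the algebra is routine once the indices are tracked carefully.
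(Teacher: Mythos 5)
Your proof is correct and follows essentially the same route as the paper: Invariants (1)--(8) carry over because neither $d_{\min}$ nor the front of the critical list changes, and Invariant (9) is split exactly as the paper does into the new pair $(k_{t-1},k_t)=(a_m,i)$ (handled by chaining the failed test \eqref{equ:30} for the last-removed element with the old priority property) and the inherited pairs at $j=i$ (handled via Observation~\ref{obser:10} and a ratio-combination identity). The only cosmetic difference is that for the inherited pairs you run a downward induction anchored at the rear-processing termination test and repeatedly apply the mediant identity, whereas the paper anchors once at $k_{t-2}$ and combines two inequalities; both are valid, and your version is arguably a bit cleaner to state.
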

\begin{proof}
For the sake of differentiation, we use $\scrL'=\{k_s',k_{s+1}',\ldots,k'_{t'}\}$
to denote the critical list before we process $I_i$. After $I_i$ is processed, we
have $\scrL=\{k_s,k_{s+1},\ldots,k_t\}$.
According to our algorithm, $\scrL$ is obtained from $\scrL'$ by possibly removing
some elements of $\scrL'$ from the rear and then adding $i$ to the end.
Hence, $k_h=k'_h$ for any $h\in [s, t-1]$ and $k_t=i$.
In particular, $k_s=k_s'$ since $\scrL$ has at least two elements (i.e., $s<t$).

\begin{enumerate}

\item
Since the ``temporary'' location of $p_i$ is computed,
the first invariant holds.

\item
The second invariant trivially holds since
none of $d_{\min}$, $i^*$, and $j^*$ is changed when we process $I_i$.

\item

Since $k_t=i$, Invariant (3) holds.

\item

To prove Invariant (4), we need to show that $p_{k_s}=l_{k_s}$. Since
the same invariant holds for $\scrL'$, $p_{k'_s}=l_{k'_s}$. Due to
$k_s=k_s'$, we obtain $p_{k_s}=l_{k_s}$.

\item
Invariant (5) trivially holds since $k_s=k_s'$  and the same invariant holds before $I_i$
is processed.

\item

Similarly, since $k_s=k_s'$, Invariant (6) holds.

\item

Similarly, since $k_s=k_s'$, Invariant (7) holds.

\item

To prove Invariant (8), we need to show that $p_j$ is implicitly set to
$l_{k_s}+d_{\min}\cdot (j-k_s)$ and $p_j\in I_j$ for each $j\in [k_s+1,i]$.

Recall that $k_s=k_s'$ and $d_{\min}$ does not change when we process $I_i$.
Since the same invariant holds before $I_j$ is processed, for $j\in
[k_s+1,i-1]$, it is true that $p_j$ is implicitly set to
$l_{k_s}+d_{\min}\cdot (j-k_s)$ and $p_j\in I_j$. For $j=i$, since
$p_{i}=p_{i-1}+d_{\min}$ and $p_i\in I_i$, $p_i=l_{k_s}+d_{\min}\cdot (i-k_s)$.

Hence, this invariant also holds.
\end{enumerate}

The above has proved that the first eight invariants hold. It remains to prove
the last invariant, i.e., the priority property of $\scrL$.
Our goal is to show that for any $h\in [s,t-1]$,
Inequality \eqref{equ:10} holds for any $j\in [k_h+1,i]$ with $j\neq k_{h+1}$.

Consider any $h\in [s,t-1]$ and any $j\in [k_h+1,i]$
with $j\neq k_{h+1}$. Since $h\leq t-1$, $k'_{h}=k_h$.
Depending on whether $h\leq t-2$ or $h=t-1$, there are two cases.

\paragraph{The case $h\leq t-2$.}
In this case, $h+1\leq t-1$ and thus $k'_{h+1}=k_{h+1}$.

If $j\leq i-1$, then $j\in [k_h+1,i-1]=[k'_h+1,i-1]$.  Since the
priority property holds for $\scrL'$, we have
$\frac{l_{k'_{h+1}}-l_{k'_{h}}}{k'_{h+1}-k'_{h}} >
\frac{l_j-l_{k'_{h}}}{j-k'_{h}}$. As $k'_h=k_h$ and
$k'_{h+1}=k_{h+1}$, Inequality \eqref{equ:10} hold for $j$ and $h$.

If $j=i$, then Inequality \eqref{equ:10} can be proved with the help
of Observation \ref{obser:10}, as follows.

Since $h\leq t-2$ and $s\leq h<t-1$, $k_s$ is not $k_{t-1}$.
Since $k_{t-1}$ is not removed from $\scrL$, according to our algorithm,
Inequality \eqref{equ:30} must be true with replacing $t$ by $t-1$, i.e.,
$\frac{l_{k_{t-1}}-l_{k_{t-2}}}{k_{t-1}-k_{t-2}} > \frac{l_{i}-l_{k_{t-2}}}{i-k_{t-2}}$.

Further, recall that $k_m=k'_m$ for all $m\in [s,t-1]$.
Due to the priority property of
$\scrL'$ and by Observation~\ref{obser:10}, we obtain
$\frac{l_{k_{h+1}}-l_{k_{h}}}{k_{h+1}-k_{h}} >
\frac{l_{k_{t-1}}-l_{k_{t-2}}}{k_{t-1}-k_{t-2}}$.

Combining the above two inequalities gives us
\begin{equation}\label{equ:40}
\frac{l_{k_{h+1}}-l_{k_{h}}}{k_{h+1}-k_{h}} >
\frac{l_{i}-l_{k_{t-2}}}{i-k_{t-2}}.
\end{equation}

Depending on whether $h<t-2$, there are further two subcases.

\begin{enumerate}
\item
If $h=t-2$, then Inequality~\eqref{equ:40} is Inequality \eqref{equ:10}
for $j=i$. So we are done with the proof.

\item
If $h<t-2$, then, $k_h< k_{t-2}\leq i-1$. Recall that $k'_h=k_h$ and $k'_{t-2}=k_{t-2}$.
Due to the priority property of $\scrL'$ and by setting $j=k'_{t-2}$ in Inequality~\eqref{equ:10}, we obtain $\frac{l_{k'_{h+1}}-l_{k'_{h}}}{k'_{h+1}-k'_{h}} > \frac{l_{k'_{t-2}}-l_{k'_{h}}}{k'_{t-2}-k'_{h}}$.

Again, because $k'_h=k_h$, $k'_{h+1}=k_{h+1}$, and $k'_{t-2}=k_{t-2}$, we have
\begin{equation}\label{equ:50}
\frac{l_{k_{h+1}}-l_{k_{h}}}{k_{h+1}-k_{h}} > \frac{l_{k_{t-2}}-l_{k_{h}}}{k_{t-2}-k_{h}}.
\end{equation}

Note that for any positive numbers $x,a,b,c,d$ such that $x>\frac{a}{b}$ and $x>\frac{c}{d}$, it always holds that $x>\frac{a+c}{b+d}$. Applying this to Inequalities \eqref{equ:40} and \eqref{equ:50} leads to $\frac{l_{k_{h+1}}-l_{k_{h}}}{k_{h+1}-k_{h}} > \frac{l_{i}-l_{k_{h}}}{i-k_{h}}$, which is Inequality \eqref{equ:10} for $j=i$.

\end{enumerate}

This proves Inequality \eqref{equ:10} for the case $h\leq t-2$.

\paragraph{The case $h= t-1$.}
In this case, $k_{h+1}=k_t=i$. Due to $j\neq k_{h+1}$, $j\neq i$.

If none of the elements of $\scrL'$ was removed when we updated $\scrL$, i.e.,
$\scrL=\scrL'\cup\{i\}$, then $k_{t-1}=k'_{t'}$.
Since $k'_{t'}=i-1$, $k_h=k_{t-1}=k'_{t'}=i-1$.
Therefore, $k_h+1=i$, and there is no $j$ with $k_h+1\leq j\leq i$ and $j\neq k_{h+1}$ ($=k_t=i$).
Hence, we have nothing to prove for Inequality \eqref{equ:10} in this case.

In the following, we assume at least one element was removed from $\scrL'$ when we
updated $\scrL$. Since $k'_{t-1}=k_{t-1}$ is the last element of $\scrL'$
remaining in $\scrL$, $k'_t$ is the last element removed from $\scrL'$ when we
process $I_i$.  According to the algorithm, $k'_t$
was removed because Inequality~\eqref{equ:30} was not true, i.e., the following holds
\begin{equation}\label{equ:60}
\frac{l_{k'_{t}}-l_{k'_{t-1}}}{k'_{t}-k'_{t-1}} \leq
\frac{l_{i}-l_{k'_{t-1}}}{i-k'_{t-1}}.
\end{equation}

Recall that $k_h+1\leq j\leq i$, $j\neq i$, and
$k_h=k_{t-1}=k'_{t-1}$. Due to the priority property of $\scrL'$ and by
setting $h=t-1$ in Inequality~\eqref{equ:10}, we obtain
\begin{equation}\label{equ:70}
\frac{l_{k'_{t}}-l_{k'_{t-1}}}{k'_{t}-k'_{t-1}} >
\frac{l_j-l_{k'_{t-1}}}{j-k'_{t-1}}.
\end{equation}

Combining Inequalities \eqref{equ:60} and \eqref{equ:70}, we obtain
$\frac{l_{i}-l_{k'_{t-1}}}{i-k'_{t-1}}>\frac{l_j-l_{k'_{t-1}}}{j-k'_{t-1}}$,
which is Inequality \eqref{equ:10} for $h$ and $j$ since $h=t-1$,
$k'_t=k_t=i$, and $k'_{t-1}=k_{t-1}$.

The above proves that the priority property holds for the updated list $\scrL$.

This proves that all algorithm invariants hold after $I_i$ is
processed.  \qed
\end{proof}

\subsubsection{The case $p_{i-1}+d_{\min}>r_i$}

In this case,
$p_{i-1}+d_{\min}$ is to the right of $I_i$.
We first set $p_i=r_i$. Then we perform the following
{\em front-processing procedure}
(because it processes the elements of $\scrL$ from the front to the rear).

If $\scrL$ has only one element (i.e., $s=t$), then we stop.

Otherwise,
we check whether the following is true
\begin{equation}\label{equ:80}
\frac{l_{k_{s+1}}-l_{k_s}}{k_{s+1}-k_s}> \frac{r_i-l_{k_s}}{i-k_s}.
\end{equation}

If it is true, then we perform the following
{\em finalization step}: for each $j=k_s+1,k_s+2,\ldots, k_{s+1}$, we
explicitly compute
$p_{j}=l_{k_s}+\frac{l_{k_{s+1}}-l_{k_s}}{k_{s+1}-k_s}\cdot (j-k_s)$ and
finalize it. Further, we remove $k_s$ from $\scrL$ and increase $s$ by $1$.
Next, we continue the same procedure as above (with the increased
$s$), i.e., first check whether $s=t$, and
if not, check whether Inequality \eqref{equ:80} is true. The
front-processing procedure stops
if either $s=t$ (i.e., $\scrL$ only has one element) or Inequality \eqref{equ:80} is
not true.

After the front-processing procedure,
we update $d_{\min}=(r_i-l_{k_s})/(i-k_s)$, $i^*=k_s$, and $j^*=i$.
Finally, we update the critical list $\scrL$ using the rear-processing procedure, in the same way as in the above second case where $l_i < p_{i-1}+d_{\min}\leq r_i$.
We also ``implicitly'' set $p_j=l_{k_s}+d_{\min}\cdot (j-k_s)$ for each $j\in [k_s+1,i]$ (this is only for the analysis and our algorithm does not do so explicitly).

This finishes the processing of $I_i$.

\begin{lemma}
In the case $p_{i-1}+d_{\min}>r_i$, all algorithm invariants hold
after $I_i$ is processed.
\end{lemma}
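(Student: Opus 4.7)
The plan is to follow the same skeleton as the proof of Lemma~\ref{lem:30}, but with two new complications: (a) the front-processing procedure finalizes a prefix of the implicit points and advances the front of $\scrL$, and (b) $d_{\min}$, $i^*$, and $j^*$ all change. I will introduce three snapshots of the critical list for bookkeeping: $\scrL'=\{k'_{s'},\ldots,k'_{t'}\}$ before processing $I_i$, an intermediate list $\scrL''$ immediately after the front-processing (so that $\scrL''$ is a suffix of $\scrL'$), and the final list $\scrL=\{k_s,\ldots,k_t\}$ obtained by rear-processing $\scrL''\cup\{i\}$. I will also denote by $d_{\min}'$ the value of $d_{\min}$ before processing $I_i$.

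First I would dispense with the easy invariants. Invariant~(1) is immediate since $p_i$ is set to $r_i$. Invariant~(2) follows because the algorithm explicitly sets $d_{\min}=(r_i-l_{k_s})/(i-k_s)$, $i^*=k_s$, and $j^*=i$. Invariant~(3) holds because the rear-processing ends with $i$ appended. Invariant~(4) uses the fact that the first surviving index $k_s=k''_{s''}$ coincided with some old $k'_m$, for which $p_{k'_m}=l_{k'_m}$ already held under $\scrL'$; in addition, if front-processing actually advanced the front, then at the final finalization step the formula for $p_{k_s}=l_{k_{s-1}^{old}}+d_s(k_s-k_{s-1}^{old})$ simplifies to $l_{k_s}$. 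Invariant~(5) is justified by explicitly listing the points finalized by the finalization step.

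The core technical step is to verify Invariants (6), (7), and (8) together with the monotonicity claims they implicitly rely on. The key observation is that the sequence of slopes $d_s^{(1)},d_s^{(2)},\ldots$ computed along successive front-processing rounds is strictly decreasing by Observation~\ref{obser:10}, and that $d_{\min}'\geq d_s^{(1)}$ since Invariant~(8) for $\scrL'$ gives $p_{k'_{s'+1}}=l_{k'_{s'}}+d_{\min}'(k'_{s'+1}-k'_{s'})$ while also $p_{k'_{s'+1}}\geq l_{k'_{s'+1}}$. Combined with the stopping criterion $d_s^{(\text{last})}\leq(r_i-l_{k_s})/(i-k_s)=d_{\min}$, this chain $d_{\min}'\geq d_s^{(1)}>d_s^{(2)}>\cdots>d_{\min}$ establishes that (i) every finalized adjacent spacing is at least $d_{\min}$ (giving Invariant~(7)), (ii) every finalized $p_j$ is at most its old implicit location and hence satisfies $p_j\leq r_j$, while the priority property of $\scrL'$ yields $p_j\geq l_j$ (giving Invariant~(6)), and (iii) the new implicit positions $l_{k_s}+d_{\min}(j-k_s)$ for $j\in[k_s+1,i]$ lie in their intervals. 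For the upper bound in~(iii) I would again compare with the old implicit positions (since $d_{\min}<d_{\min}'$); the endpoint $j=i$ reduces to $r_i$ by construction. For the lower bound I would apply the priority property of $\scrL'$ restricted to the indices that survived front-processing, which persists because they are a suffix of $\scrL'$.

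The main obstacle I expect is Invariant~(9), the priority property of the updated $\scrL$. Here I would recycle the argument in Lemma~\ref{lem:30}: the rear-processing is identical in the two cases, so for pairs $(h,j)$ with $h\leq t-2$ or with $j<i$ the priority property for $\scrL''$ (which is a suffix of $\scrL'$) transfers verbatim; the only genuinely new inequality is the one with $j=i$ and $h\leq t-2$, and with $h=t-1$, each of which is derived by combining the inequality that caused the last removal in the rear-processing with an inequality coming from the priority property of $\scrL''$, exactly as in the proof of Lemma~\ref{lem:30}. The subtlety particular to this case is that I must first argue the priority property carries over from $\scrL'$ to the intermediate list $\scrL''$, which is immediate because $\scrL''$ is a suffix of $\scrL'$. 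With that lemma in hand, the remainder of the proof of Invariant~(9) is a straightforward adaptation of the argument already given.
\qed
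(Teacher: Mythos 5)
Your overall plan is the paper's own proof: the same invariant-by-invariant verification, the same tools (Observation~\ref{obser:10}, the priority property of $\scrL'$, comparison of new positions with the old implicit positions, and reuse of the Lemma~\ref{lem:30} argument for Invariant~(9)); the intermediate list $\scrL''$ is harmless extra bookkeeping since it is just a suffix of $\scrL'$. However, the central chain $d_{\min}'\geq d_s^{(1)}>d_s^{(2)}>\cdots>d_{\min}$, on which you hang Invariants (6), (7) and (8), is not justified by the facts you cite, and its last link is the crux. The stopping criterion gives $\frac{l_{k_{s+1}}-l_{k_s}}{k_{s+1}-k_s}\leq\frac{r_i-l_{k_s}}{i-k_s}=d_{\min}$ for the slope at the \emph{surviving} front --- an upper bound on a slope that was never used in any finalization step, and an inequality pointing in the wrong direction. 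What you actually need is that the \emph{last slope used}, $\frac{l_{k'_{w}}-l_{k'_{w-1}}}{k'_{w}-k'_{w-1}}$ (where $k_s=k'_w$), strictly exceeds $d_{\min}$. That follows not from the stopping criterion but from the \emph{removal} condition for $k'_{w-1}$, namely $\frac{l_{k'_{w}}-l_{k'_{w-1}}}{k'_{w}-k'_{w-1}}>\frac{r_i-l_{k'_{w-1}}}{i-k'_{w-1}}$, combined with the elementary fact that $\frac{a}{b}>\frac{c}{d}$ with $a<c$, $b<d$ implies $\frac{a}{b}>\frac{c-a}{d-b}$, which converts the right-hand side into $\frac{r_i-l_{k'_w}}{i-k'_w}=d_{\min}$. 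Without this step your chain does not close, and (i) and (ii) of your argument are unsupported.

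Two smaller omissions. First, when the front-processing removes nothing ($k_s=k'_{s'}$) your chain is empty, yet you still invoke $d_{\min}<d_{\min}'$ for the upper bounds in (6)--(8); in that case the inequality must come from the case hypothesis itself: old Invariant~(8) gives $p_{i-1}=l_{k_s}+d'_{\min}(i-1-k_s)$, and $p_{i-1}+d'_{\min}>r_i$ then yields $d'_{\min}>\frac{r_i-l_{k_s}}{i-k_s}=d_{\min}$. Second, for the lower bound $p_j\geq l_j$ in Invariant~(8), the priority property of the surviving suffix alone is not enough --- it bounds $\frac{l_j-l_{k_s}}{j-k_s}$ by the slope to $k_{s+1}$, and you must again use the failure of Inequality~\eqref{equ:80} at the surviving front to bound that slope by $d_{\min}$; the degenerate case where the surviving list is a single element (so $k_s=i-1$ and no $k_{s+1}$ exists) needs to be noted separately. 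These are all repairable with the paper's own devices, but as written the proposal asserts rather than proves the inequalities that make the case work.
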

\begin{proof}
Let $\scrL=\{k_s,k_{s+1},\ldots,k_t\}$ be the critical list after $I_i$
is processed.  For the sake
of differentiation, we use $\scrL'=\{k_s',k_{s+1}',\ldots,k'_{t'}\}$
to denote the critical list before we process $I_i$.

According to our algorithm,
$\scrL$ is obtained from $\scrL'$ by the following two main steps:
(1) the front-processing step that possibly removes
some elements of $\scrL'$ from the front;
(2) the rear-processing step that possibly removes some elements of
$\scrL'$ from the rear and then adds $i$ to the rear.
Hence, $k_t=i$.

Let $w$ be the index of $\scrL'$ such that $k_s=k'_{w}$.
If $w\neq s$, then $k'_{s},k'_{s+1},\ldots,k'_{w-1}$ are not in $\scrL$.

\paragraph{The first invariant.}
Since the ``temporary'' location of $p_i$ is computed with $p_i=r_i$,
the first invariant holds.

\paragraph{The second invariant.}
By our way of updating $d_{\min}$, $i^*$,
and $j^*$, it holds that $d_{\min}=(r_{j^*}-l_{i^*})/(j^*-i^*)$, with
$1\leq i^*\leq j^*\leq i$. Hence, the invariant holds.

\paragraph{The third invariant.}
Since $k_t=i$, the third invariant trivially holds.

\paragraph{The fourth invariant.}
We need to show that $p_{k_s}=l_{k_s}$.

If $s=w$, then $k_s=k'_s$ and $k_s$ is also
the first element of $\scrL'$. Since the fourth invariant holds before
$I_i$ is processed, $p_{k'_s}=l_{k'_s}$. Thus, we obtain $p_{k_s}=l_{k_s}$.

If $s\neq w$, then when $k'_{w-1}$ was removed from $\scrL$ in the algorithm, the
finalization step explicitly computed
$p_{j}=l_{k'_{w-1}}+\frac{l_{k'_w}-l_{k'_{w-1}}}{k'_{w}-k'_{w-1}}\cdot (j-k'_{w-1})$
for each $j\in [k'_{w-1}+1,k'_w]$. Once can verify that
$p_{k'_w}=l_{k'_w}$. Since $k'_w=k_s$, we obtain $p_{k_s}=l_{k_s}$.

This proves that the fourth invariant also holds.

\paragraph{The fifth invariant.}
Our goal is to show that all points in $P(1,k_s)$ have
been finalized. Since all points in
$P(1,k_s')$ have been finalized before we process $I_i$, it is
sufficient to show that the points for $P(k_s'+1,k_s)$ were finalized
in the step of processing $I_i$.

If $w=s$, then $k_s=k_s'$ and we are done with the proof.
Otherwise, for each $h\in [s,w-1]$, when $k'_h$ was removed from
$\scrL$, the finalization step finalized the points in
$P(k'_h+1,k'_{h+1})$. Hence, all points of $P(k_s'+1,k'_w)$
($=P(k_s'+1,k_s)$) were finalized. Hence, the fifth invariant holds.

\paragraph{The sixth invariant.}
Our goal is to show that for any $p_j$ with $j\in [1,k_s]$, $p_j$ is in $I_j$.

Note that the position of $p_j$ is not changed for any $j\leq k'_s$ when
we process the interval $I_i$.
Since the same invariant holds before we process $I_i$,
$p_j$ is in $I_j$ for any $j\in [1,k_s']$.
Hence, if $k_s=k_s'$, we are done with proof. Otherwise, it
is sufficient to show that $p_j$ is in $I_j$ for any $j\in
[k_{s'}+1,k_s]$.

For $j=k_s$, since $p_j=l_j$, it is trivially true that $p_j$
is in $I_j$. In the following, we assume $j\in [k'_h,k'_{h+1})$ for some
$h\in [s,w-1]$ (recall that $k_s=k'_w$).

According to our algorithm,
$p_j=l_{k'_h}+\frac{l_{k'_{h+1}}-l_{k'_h}}{k'_{h+1}-k'_h}\cdot (j-k_h')$.
Let $d'_{\min}$ be the value of $d_{\min}$ before $I_i$ is processed.
Let $p_j'$ be the original ``temporary'' location of $p_j$ before
$I_i$ is processed.
Since the eighth invariant holds before $I_i$ is processed, we have
$p_j'=l_{k_s'}+d'_{min}\cdot (j-k_s')$ and $p_j'\in I_j$.

We first show that  $p_j\leq p_j'$, i.e., comparing with its original location,
$p_j$ has been moved leftwards in the step of processing $I_i$.
This can be easily seen from the intuitive understanding of the
algorithm. We provide a formal proof below.

Since Invariant (8) holds before $I_i$ is processed,
$p_{k'_{s+1}}$ was implicitly set to
$l_{k'_s}+d'_{\min}\cdot (k'_{s+1}-k'_s)$, which is in $I_{k'_{s+1}}$. Hence,
$l_{k'_s}+d'_{\min}\cdot (k'_{s+1}-k'_s)\geq l_{k'_{s+1}}$. Thus,
$d'_{\min}\geq \frac{l_{k'_{s+1}}-l_{k'_s}}{k'_{s+1}-k'_s}$.
Consequently, $p_j'=l_{k_s'}+d'_{min}\cdot (j-k_s')
\geq l_{k_s'}+\frac{l_{k'_{s+1}}-l_{k'_s}}{k'_{s+1}-k'_s}\cdot (j-k_s')$.

Since the priority property holds for $\scrL'$, by Observation~\ref{obser:10},
$\frac{k'_{s+1}-k'_s}{l_{k'_{s+1}}-l_{k'_s}} \geq
\frac{l_{k'_{h+1}}-l_{k'_{h}}}{k'_{h+1}-k'_{h}}$.
Hence, $p_j=l_{k'_h}+\frac{l_{k'_{h+1}}-l_{k'_h}}{k'_{h+1}-k'_h}\cdot
(j-k_h')\leq l_{k'_h}+\frac{k'_{s+1}-k'_s}{l_{k'_{s+1}}-l_{k'_s}}\cdot (j-k_h')$.

Now to prove $p_j\leq p_j'$, it is sufficient to prove
$\frac{k'_{s+1}-k'_s}{l_{k'_{s+1}}-l_{k'_s}}\geq
\frac{l_{k'_h}-l_{k'_s}}{k'_h-k'_s}$, which is true by Inequality
\eqref{equ:10} (replacing $h$ and $j$ in Inequality \eqref{equ:10} by
$s$ and $k'_h$, respectively) due to the priority property of $\scrL'$.

The above proves that $p_j\leq p_j'$. Since $p_j'\in I_j$, $p_j'\leq r_j$, and
thus, $p_j\leq r_j$. To prove $p_j\in I_j$, it remains to prove $p_j\geq
l_j$.

If $j=k'_h$, then $p_j=l_j$ and we are done with the proof. Otherwise,
due to the priority property of $\scrL'$ and by applying Inequality \eqref{equ:10}, we have
$\frac{l_{k'_{h+1}}-l_{k'_{h}}}{k'_{h+1}-k'_{h}} >
\frac{l_j-l_{k'_{h}}}{j-k'_{h}}$.
Therefore, $p_j=l_{k'_h}+\frac{l_{k'_{h+1}}-l_{k'_h}}{k'_{h+1}-k'_h}\cdot
(j-k_h')> l_j$.

This proves that $p_j$ is in $I_j$. Thus, the sixth invariant holds.

\paragraph{The seventh invariant.}
The goal is to show that the distance of any pair of adjacent points of $P(1,k_s)$ is at least
$d_{\min}$.

Let $d'_{\min}$ be the value of $d_{\min}$ before we process
$I_i$.  We first prove $d'_{\min}> d_{\min}$.

Indeed, if $k_s=k'_s$, then
since the eighth invariant holds before $I_i$ is processed,
$d'_{\min}=\frac{p'_{i-1}-l_{k_s}}{i-1-k_s}$, where $p'_{i-1}$ is the
location of $p_{i-1}$ before we process $I_{i}$.
Recall that $p_{i-1}'+d'_{\min}>r_i$. Hence, we have
$d'_{\min}>\frac{r_i-d'_{\min}-l_{k_s}}{i-1-k_s}$. We can further deduce
$d'_{\min}>\frac{r_i-l_{k_s}}{i-k_s}$. Since
$d_{\min}=\frac{r_i-l_{k_s}}{i-k_s}$, we obtain $d'_{\min}> d_{\min}$.

If $k_s\neq k'_{s}$, since $k'_{w-1}$ was removed from $\scrL$,
Inequality \eqref{equ:80} must hold for $s=w-1$, i.e.,
$\frac{l_{k'_{w}}-l_{k'_{w-1}}}{k'_{w}-k'_{w-1}}>
\frac{r_i-l_{k'_{w-1}}}{i-k'_{w-1}}$. Note that for any four positive
numbers $a,b,c,d$ with $\frac{a}{b}>\frac{c}{d}$, $a<c$, and $b<d$, it
always holds that $\frac{a}{b}>\frac{c-a}{d-b}$. Applying this to the
above inequality gives us $\frac{l_{k'_{w}}-l_{k'_{w-1}}}{k'_{w}-k'_{w-1}}>
\frac{r_i-l_{k'_{w}}}{i-k'_{w}}$.

Since $d_{\min}=\frac{r_i-l_{k_s}}{i-k_s}$ and $k_s=k'_w$, we obtain
$\frac{l_{k'_{w}}-l_{k'_{w-1}}}{k'_{w}-k'_{w-1}}> d_{\min}$.

On the other hand, before $I_i$ is processed, according to the eighth invariant,
$l_{k'_s}+d'_{\min}\cdot (k'_{s+1}-k'_s)$ is in $I_{k'_{s+1}}$. Hence,
$l_{k'_s}+d'_{\min}\cdot (k'_{s+1}-k'_s)\geq l_{k'_{s+1}}$ and
$d'_{\min}\geq\frac{l_{k'_{s+1}}-l_{k'_s}}{k'_{s+1}-k'_s}$.

Further, due to the priority property of $\scrL'$ and by Observation
\ref{obser:10}, it holds that $\frac{k'_{s+1}-k'_s}{l_{k'_{s+1}}-l_{k'_s}}\geq
\frac{l_{k'_{w}}-l_{k'_{w-1}}}{k'_{w}-k'_{w-1}}$.

Combining our above discussions, we obtain $d'_{\min}>d_{\min}$.

Next, we proceed to prove Invariant (7).

Since Invariant (7) holds before $I_i$ is processed, the distance of every pair of
adjacent points of $P(1,k_s')$ is at least $d'_{\min}$.  To prove that
the distance of every pair of adjacent points of $P(1,k_s)$ is at
least $d_{\min}$, since $d'_{\min}>d_{\min}$, if $k_s=k'_s$, then we are
done with the proof, otherwise it is sufficient to show that the
distance of every pair of adjacent points of $P(k_s',k_s)$ is at least
$d_{\min}$.

Consider any $h\in [s,w-1]$.
When $k'_{h}$ is removed from $\scrL$, according to the finalization step,
every pair of adjacent points of $P(k_h',k'_{h+1})$ is
$\frac{l_{k'_{h+1}}-l_{k_{h'}}}{k'_{h+1}-k'_h}$.
Due to the priority property of $\scrL'$ and by Observation \ref{obser:10},
$\frac{l_{k'_{h+1}}-l_{k'_h}}{k'_{h+1}-k'_h}\geq
\frac{l_{k'_{w}}-l_{k'_{w-1}}}{k'_{w}-k'_{w-1}}$.
Recall that we have proved above that
$\frac{l_{k'_{w}}-l_{k'_{w-1}}}{k'_{w}-k'_{w-1}}>d_{\min}$.
Hence, we obtain that the distance of every pair of adjacent points of
$P(k_h',k'_{h+1})$ is at least $d_{\min}$.
This further implies that the distance of every pair of adjacent points of
$P(k_s',k'_{w})$ ($=P(k_s',k_s)$) is at least $d_{\min}$.

Hence, the seventh invariant holds.

\paragraph{The eighth invariant.}
Consider any $j\in [k_s,i]$. Based on our algorithm,  $p_j$ is implicitly
set to $l_{k_s}+d_{\min}\cdot (j-k_s)$.  Hence, to prove the invariant, it
remains to show that $p_j$ is in $I_j$.

If $j=i$, then since $p_i=r_i$, it is true that $p_j\in
I_j$. In the following, we assume $j\leq i-1$.

Let $p_j'$ be the ``temporary'' location of $p_j$ before $I_i$ is processed.
Since the eighth invariant holds before $I_i$ is processed,
$p_j'=l_{k_s'}+d'_{\min}\cdot (j-k_s')$ and $p_j'\in I_j$.
Again, let $d'_{\min}$ be the value of $d_{\min}$ before we process
$I_i$.  Recall that we have proved above that $d'_{\min}> d_{\min}$.

We claim that $p_j\leq p_j'$. Indeed, if $k_s=k_s'$, then $p_j\leq p_j'$ follows
from $d'_{\min}> d_{\min}$.  Otherwise, note that
$p_j'=l_{k_s'}+d'_{\min}\cdot (k_w'-k_s') + d'_{\min}\cdot (j-k_w')=
p'_{k'_w}+d'_{\min}\cdot (j-k_w')$, where $p'_{k'_w}$ is the
``temporary'' location of $p_{k'_w}$ before $I_i$ is processed.
Since $k'_w=k_s$, we have $p_j'=p'_{k_s}+d'_{\min}\cdot (j-k_s)$.

Since Invariant (8) holds before $I_i$ is processed, $p'_{k_s}$ is in $I_{k_s}$. Hence,
$p'_{k_s}\geq l_{k_s}$. Therefore, we obtain
$p_j'\geq l_{k_s}+d'_{\min}\cdot (j-k_s)\geq l_{k_s}+d_{\min}\cdot (j-k_s)=p_j$.

This proves the above claim that $p_j\leq p_j'$.

Since $p_j'\in I_j$ and $p_j\leq p_j'$, we obtain $p_j\leq r_j$. To prove
$p_j\in I_j$, it remains to show $p_j\geq l_j$, as follows.

According to our algorithm, $k_s$
was not removed from $\scrL$ either because $k_s$ is the last element
of $\scrL'$ or because Inequality \eqref{equ:80} is not true.

In the former case, it holds that $k_s=i-1$. Since $j\in [k_s,i-1]$,
$j=k_s$. Due to $p_{k_s}=l_{k_s}$, we obtain $p_j\geq l_j$.

In the latter case, $k_s$ is not the last element of $\scrL'$ that is in $\scrL$.
Since $k'_w=k_s$, we have $k'_{w+1}=k_{s+1}$.
Due to the priority property of $\scrL'$ and by
Inequality \eqref{equ:10} (with $h=w$), we have
$\frac{l_{k'_{w+1}}-l_{k'_{w}}}{k'_{w+1}-k'_{w}} \geq
\frac{l_j-l_{k'_{w}}}{j-k'_{w}}$.
Since $k_s=k'_{w}$ and $k_{s+1}=k'_{w+1}$, it holds that
$\frac{l_{k_{s+1}}-l_{k_{s}}}{k_{s+1}-k_{s}} \geq
\frac{l_j-l_{k_{s}}}{j-k_{s}}$.
Since Inequality \eqref{equ:80} is not true, we further obtain
$\frac{r_i-l_{k_{s}}}{i-k_{s}} \geq \frac{l_j-l_{k_{s}}}{j-k_{s}}$.
Recall that $d_{\min}=\frac{r_i-l_{k_s}}{i-k_s}$. Hence, $d_{\min}\geq
\frac{l_j-l_{k_{s}}}{j-k_{s}}$ and
$p_j=l_{k_s}+d_{\min}\cdot (j-k_s)\geq l_j$.

This proves that the eighth invariant holds.

\paragraph{The ninth invariant.}
Our goal is to prove that the priority property holds for $\scrL$.
Since the priority property holds for $\scrL'$,
intuitively we only need to take care of the ``influence'' of
$i$ (i.e., some elements were possibly removed from the rear of $\scrL'$ and
$i$ was added to the rear in the rear-processing procedure).
Note that although some elements were also possibly removed from the front of
$\scrL'$ in the front-processing procedure, this does not affect the
priority property of the remaining elements of the list. Hence, to prove that the priority property holds for
$\scrL$, we have exactly the same situation as in Lemma~\ref{lem:30}.
Hence, we can use the same proof
as that for Lemma~\ref{lem:30}. We omit the details.

This proves that all algorithm invariants hold after $I_i$ is processed.
The lemma thus follows.
\qed
\end{proof}

The above describes a general step of the algorithm for processing the
interval $I_i$. In addition, if $i=n$ and $k_s<n$, we also need to perform the
following additional finalization step: for each $j\in [k_s+1, n]$, we explicitly compute
$p_j=l_{k_s}+d_{\min}\cdot (j-k_s)$
and finalize it.
This finishes the algorithm.

\subsection{The Correctness and the Time Analysis}
\label{sec:correct}

Based on the algorithm invariants and Corollary~\ref{coro:10}, the following
lemma proves the correctness of the algorithm.

\begin{lemma}\label{lem:50}
The algorithm correctly computes an optimal solution.
\end{lemma}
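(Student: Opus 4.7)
The plan is to leverage Corollary~\ref{coro:10} together with the algorithm invariants. After $I_n$ has been processed and the final finalization step has run, I want to show two things: (a) the output is a valid solution, i.e., $p_j \in I_j$ for every $j \in [1,n]$ and every $p_j$ has been finalized; and (b) the minimum pairwise distance of $P$ in the output equals $d_{\min}$ with $d_{\min} = (r_{j^*} - l_{i^*})/(j^* - i^*)$. Once (a) and (b) are established, Corollary~\ref{coro:10} immediately yields optimality.

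For (a), I would argue as follows. Right before the additional finalization step at the end, the invariants hold for $i=n$. Invariants (5) and (6) tell me that every $p_j$ with $j \le k_s$ is finalized and lies in $I_j$. The additional finalization step explicitly sets $p_j = l_{k_s} + d_{\min} \cdot (j - k_s)$ for each $j \in [k_s+1, n]$. By Invariant (8), these values equal the implicit positions that were already known to satisfy $p_j \in I_j$. Hence after the final step every point is finalized and lies in its interval.

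For (b), I would combine Invariants (7) and (8). Invariant (7) says that adjacent points in $P(1,k_s)$ are at distance at least $d_{\min}$, and Invariant (8) says that adjacent points in $P(k_s, n)$ are at distance exactly $d_{\min}$ (the transition point $p_{k_s}$ is shared, and the $j=n$ case follows from the extra finalization step together with $p_n = l_{k_s} + d_{\min}\cdot(n-k_s)$). Since the $p_j$ are left-to-right sorted, the minimum pairwise distance of $P$ equals the minimum distance between consecutive points, which is therefore exactly $d_{\min}$. By Invariant (2), $d_{\min} = (r_{j^*} - l_{i^*})/(j^* - i^*)$ for some $1 \le i^* \le j^* \le n$, so Corollary~\ref{coro:10} concludes that the solution is optimal.

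I don't foresee a hard obstacle here: the lemma is essentially a summary statement, and the substantive work has been done in establishing the invariants for each of the three cases in Section~\ref{sec:details}. The only point that deserves a brief sanity check is that the additional finalization step at the end preserves Invariants (6) and (7) for the newly finalized points; but this is immediate because the explicit values assigned match the implicit ones guaranteed by Invariant~(8), which already ensured both membership in the intervals and the spacing equal to $d_{\min}$.
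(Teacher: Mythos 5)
Your proposal is correct and follows essentially the same route as the paper: establish feasibility (every $p_j$ finalized and in $I_j$, adjacent distances at least $d_{\min}$) from Invariants (5)--(8) and the final finalization step, then invoke Invariant (2) and Corollary~\ref{coro:10}. The only cosmetic difference is that you assert the minimum adjacent distance is \emph{exactly} $d_{\min}$, whereas the paper only needs ``at least $d_{\min}$'' since $d_{\min}=(r_{j^*}-l_{i^*})/(j^*-i^*)$ already upper-bounds $d_{opt}$ by Lemma~\ref{lem:10}; both readings yield the same conclusion.
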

\begin{proof}
Suppose $P=\{p_1,p_2,\ldots,p_n\}$ is the set of points computed by the
algorithm.
Let $d_{\min}$ be the value and $\scrL=\{k_s,k_{s+1},\ldots, k_t\}$ be the critical list after the algorithm finishes.

We first show that for each $j\in [1,n]$, $p_j$ is in $I_j$.
According to the sixth algorithm invariant of $\scrL$, for each $j\in [1,k_s]$, $p_j$ is in
$I_j$. If $k_s=n$, then we are done with the proof. Otherwise, for each $j\in
[k_s+1,n]$, according to the additional finalization step after $I_n$ is processed,
$p_j=l_{k_s}+d_{\min}\cdot (j-k_s)$, which is in $I_j$ by the eighth algorithm
invariant.

Next we show that the distance of every pair of adjacent points of $P$ is at least $d_{\min}$.
By the seventh algorithm invariant, the distance of
every pair of adjacent points of $P(1,k_s)$ is at least $d_{\min}$. If $k_s=n$,
then we are done with the proof. Otherwise, it is sufficient to show that the distance of
every pair of adjacent points of $P(k_s,n)$ is at least $d_{\min}$,
which is true according to
the additional finalization step after $I_n$ is processed.

The above proves that $P$ is a {\em feasible solution} with respect to
$d_{\min}$, i.e., all points of $P$ are in their corresponding intervals and
the distance of every pair of adjacent points of $P$ is at least $d_{\min}$.

To show that $P$ is also an optimal solution, based on the second algorithm invariant,
it holds that $d_{\min}=\frac{r_{j^*}-l_{i^*}}{j^*-i^*}$. By Corollary \ref{coro:10},
$d_{\min}$ is an optimal objective value. Therefore, $P$ is an optimal solution.
\qed
\end{proof}

The running time of the algorithm is analyzed in the proof of
Theorem \ref{theo:10}. The pseudocode is given in Algorithm \ref{algo:line}.

\begin{theorem}\label{theo:10}
Our algorithm computes an optimal solution of the line version of
points dispersion problem in $O(n)$ time.
\end{theorem}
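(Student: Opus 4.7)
The correctness half of the theorem is already settled by Lemma~\ref{lem:50}, so the plan is to focus entirely on establishing the $O(n)$ running time. I would begin by committing to a representation of the critical list $\scrL$ as a doubly linked list: this is exactly what was advertised in Section~\ref{sec:overview}, and it makes each of the three primitive operations (remove from front, remove from rear, append to rear) take $O(1)$ time, so that every single iteration of the rear-processing and front-processing loops costs only constant time.

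Next, I would partition the total work done across the entire run of the algorithm into three buckets: (i) the $O(1)$ work per interval used to decide which of the three cases applies and to update the four scalars $p_i$, $d_{\min}$, $i^*$, $j^*$; (ii) the work spent inside the rear-processing and front-processing loops on removing elements from $\scrL$ and on checking Inequalities~\eqref{equ:30} and~\eqref{equ:80}; and (iii) the work spent in the finalization steps that explicitly compute coordinates of previously implicit points, including the closing pass after $I_n$ is processed. Bucket (i) clearly contributes $O(n)$ in total.

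For bucket (ii) I would use the standard amortized argument: each index $j\in[1,n]$ is appended to $\scrL$ at most once, namely during the processing of $I_j$ (the three cases each append $i$ at most once and never append any earlier index), so the total number of removals executed over the whole algorithm is at most $n$; each successful check that does not remove, and each terminating check that sets $s=t$, costs $O(1)$ and happens at most once per interval. Hence bucket (ii) contributes $O(n)$. For bucket (iii), Invariant~(5) guarantees that once a point $p_j$ is finalized it is never revisited, so across all finalization steps performed in case~1 and in the front-processing procedure of case~3, together with the closing pass after $I_n$, each of the $n$ points is the subject of at most one explicit computation, contributing $O(n)$ total.

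Summing buckets (i)--(iii) gives the claimed $O(n)$ bound. I do not expect a genuine obstacle in this part: all the conceptual heavy lifting has already been done by the invariant analysis in Lemmas~\ref{lem:20} and \ref{lem:30} and the case-3 lemma, which ensure that the linked-list structure is well defined and that finalized points stay finalized; what remains is a careful accounting exercise. The only mild subtlety is to notice that both procedures can remove arbitrarily many elements in a single step, so that a pure worst-case per-step bound is too weak and the amortized ``each index is appended at most once'' argument is really the key.
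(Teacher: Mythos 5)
Your proposal is correct and follows essentially the same argument as the paper: correctness is delegated to Lemma~\ref{lem:50}, and the $O(n)$ time bound is obtained by charging the front- and rear-processing removals via the fact that each index enters $\scrL$ at most once (so at most $n$ removals overall) and by noting that each point is finalized exactly once. The only cosmetic difference is that you phrase the amortization in terms of insertions bounding removals while the paper states directly that each element is removed at most once; these are the same accounting.
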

\begin{proof}
In light of Lemma \ref{lem:50},
we only need to show that the running time of the algorithm is $O(n)$.

To process an interval $I_i$, according to our algorithm, we only spend $O(1)$
time in addition to two possible
procedures: a front-processing procedure and a rear-processing procedure.
Note that the front-processing procedure may contain several finalization steps.
There may also be an additional finalization step after $I_n$ is
processed. For the purpose of analyzing the total running time of the algorithm,
we exclude the finalization steps from the front-processing procedures.

For processing $I_i$,
the front-processing procedure (excluding the time of the finalization steps)
runs in $O(k+1)$ time where $k$ is the
number of elements removed from the front of the critical list $\scrL$. An easy
observation is that any element can be removed from $\scrL$
at most once in the entire
algorithm. Hence, the total time of all front-processing procedures
in the entire algorithm is $O(n)$.

Similarly, for processing $I_i$, the rear-processing procedure runs in $O(k+1)$ time where $k$ is the
number of elements removed from the rear of $\scrL$.
Again, since any element can be removed from $\scrL$ at most once in the entire
algorithm,
the total time of all rear-processing procedures in the entire algorithm is $O(n)$.

Clearly, each point is finalized exactly once in the entire algorithm. Hence,
all finalization steps in the entire algorithm together take $O(n)$ time.

Therefore, the algorithm runs in $O(n)$ time in total.
\qed
\end{proof}

\begin{algorithm}[H]
\caption{The algorithm for the line version of the problem}
\label{algo:line}
\KwIn{$n$ intervals $I_1,I_2,\ldots,I_n$ sorted from left to right on $\ell$ }
\KwOut{$n$ points $p_1,p_2,\ldots,p_n$ with $p_i\in I_i$ for each $1\leq i\leq n$} \BlankLine
$p_1\leftarrow l_1$, $i^*\leftarrow 1$, $j^*\leftarrow 1$, $d_{\min}\leftarrow \infty$,
$\scrL\leftarrow \{1\}$\;
\For{$i\leftarrow 2$ \KwTo $n$}
{
	\eIf{$p_{i-1} + d_{\min}\leq l_i$}
	{
		$p_i \leftarrow l_i$, $\scrL\leftarrow \{i\}$\;
    }
	{
		\eIf{$l_i< p_{i-1} + d_{\min}\leq  r_i$}
		{
		   $p_i\leftarrow p_{i-1} + d_{\min}$\;
	        }
                     (\tcc*[h]{$ p_{i-1} + d_{\min}>  r_i$})
		     {
		       $p_i\leftarrow r_i$,
		       $k_s \leftarrow \text{the front		       element of } \scrL $\;
		     \While(\tcc*[f]{the front-processing procedure}){$|\scrL|>1$}
		     {
		        \eIf{$\frac{l_{k_{s+1}}-l_{k_s}}{k_{s+1}-k_s}> \frac{r_i-l_{k_s}}{i-k_s}$}
			{
                          \For{$j\leftarrow k_s+1$ \KwTo $k_{s+1}$}
			  {
			     $p_j\leftarrow l_{k_s}+\frac{l_{k_{s+1}}-l_{k_s}}{k_{s+1}-k_s}\cdot (j-k_s)$\;
			  }
			  remove $k_s$ from $\scrL$, \ \
			  $k_s \leftarrow \text{the front element of } \scrL $\;
			}
			{
			  break\;
			}
		     }
		     $i^*\leftarrow k_s$, $j^*\leftarrow i$, $d_{\min}\leftarrow
		     \frac{r_{j^*}-l_{i^*}}{j^*-i^*}$\;
		     }
		   \While(\tcc*[f]{the rear-processing procedure}){$|\scrL|>1$}
		   {
		      $k_t \leftarrow \text{the rear element of } \scrL $\;
		      \lIf{$\frac{l_{k_{t}}-l_{k_{t-1}}}{k_{t}-k_{t-1}} >\frac{l_{i}-l_{k_{t-1}}}{i-k_{t-1}}$}
		      {
		        break\;
		      }
		      {
		        remove $k_t$ from $\scrL$\;
		      }
		   }
		   add $i$ to the rear of $\scrL$\;
		}
}
$k_s \leftarrow \text{the front element of } \scrL $\;
\If{$k_s<n$}
{
                \For{$j\leftarrow k_s+1$ \KwTo $n$}
		 {
		  $p_j\leftarrow l_{k_s}+d_{\min}\cdot (j-k_s)$\;
		 }
}
\end{algorithm}

\section{The Cycle Version}
\label{sec:cycle}

In the cycle version, the intervals of $\calI=\{I_1,I_2,\ldots,I_n\}$ in their index order
are sorted cyclically on $\calC$. Recall that the intervals of $\calI$ are
pairwise disjoint.


For each $i\in [1,n]$, let $l_i$ and $r_i$ denote the two endpoints of
$I_i$, respectively, such that if we move from $l_i$ to $r_i$
clockwise on $\calC$, we will always stay on $I_i$.

For any two points $p$ and $q$ on $\calC$, we use $|\rarrow{pq}|$ to denote
the length of the arc of $\calC$ from $p$ to $q$ clockwise, and thus the distance of
$p$ and $q$ on $\calC$ is $\min\{|\rarrow{pq}|,|\rarrow{qp}|\}$.

For each interval $I_i\in \calI$, we use $|I_i|$ to denote its length; note that
$|I_i|=|\rarrow{l_ir_i}|$.  We use $|\calC|$ to denote the total length of $\calC$.

Our goal is to find a point $p_i$ in $I_i$ for each $i\in [1,n]$ such that
the minimum distance between any pair of these points, i.e., $\min_{1\leq
i< j\leq
n}|p_ip_j|$, is maximized.

Let $P=\{p_1,p_2,\ldots,p_n\}$ and let $d_{opt}$ be the optimal
objective value. It is obvious that $d_{opt}\leq \frac{|\calC|}{n}$.
Again, for simplicity of discussion, we make a general position assumption that no two
endpoints of the intervals have the same location on $\calC$.

\subsection{The Algorithm}

The main idea is to convert the problem to a problem instance on a line and then
apply our line version algorithm.
More specifically, we copy all intervals of $\calI$ twice to a line $\ell$
and then apply our line version algorithm on these $2n$ intervals. The line
version algorithm will find $2n$ points in these intervals. We will show that
a subset of $n$ points in $n$ consecutive intervals correspond
to an optimal solution for our original problem on $\calC$. The
details are given below.

Let $\ell$ be the $x$-axis. For each $1\leq i\leq n$, we create an interval
$I_i'=[l_i',r_i']$ on $\ell$ with $l_i'=|\rarrow{l_1l_i}|$ and
$r_i'=l_i'+|I_i|$, which is actually a copy of $I_i$.
In other words, we first put a copy $I_1'$ of $I_1$ at $\ell$ such
that its left endpoint is at $0$ and then we continuously copy other
intervals to $\ell$ in such a way that the pairwise distances of the intervals on $\ell$
are the same as the corresponding
clockwise distances of the intervals of $\calI$ on $\calC$. The above only makes one copy
for each interval of $\calI$. Next, we make another copy for each
interval of $\calI$ in a similar way: for each $1\leq i\leq n$, we create
an interval $I_{i+n}'=[l_{i+n}',r_{i+n}']$ on $\ell$ with $l_{i+n}'=l_i'+|\calC|$
and $r_{i+n}'=r_i'+|\calC|$.
Let $\calI'=\{I_1',I_2',\ldots,I_{2n}'\}$. Note that the intervals of $\calI'$
in their index order are sorted from left to right on $\ell$.


We apply our line version algorithm on the intervals of $\calI'$. However, a subtle change is that here we initially set $d_{\min}=\frac{|\calC|}{n}$ instead of
$d_{\min}=\infty$. The rest of the algorithm is the same as before.
We want to emphasize that this change on initializing $d_{\min}$ is necessary to guarantee the correctness of our algorithm for the cycle version.
A consequence of this change 
is that after the algorithm
finishes, if $d_{\min}$ is still equal to $\frac{|\calC|}{n}$, then $\frac{|\calC|}{n}$ may
not be the optimal objective value for the above line version problem, but if
$d_{\min}<\frac{|\calC|}{n}$, then $d_{\min}$ must be the optimal objective
value. As will be clear later, this does not affect our final solution
for our original problem on the cycle $\calC$.  Let $P'=\{p_1',\ldots,p_{2n}'\}$ be the
points computed by the line version algorithm with $p_i'\in I_i'$ for each $i\in [1,2n]$.

Let $k$ be the largest index in $[1,n]$ such that $p_k'=l_k'$. Note that such an
index $k$ always exists since $p_1'=l_1'$.  Due to that we initialize
$d_{\min}=\frac{|\calC|}{n}$ in our line version algorithm, we can prove the following lemma.

\begin{lemma}\label{lem:60}
It holds that $p_{k+n}'=l_{k+n}'$.
\end{lemma}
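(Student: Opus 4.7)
Since $p'_{k+n}\in I'_{k+n}$, we have $p'_{k+n}\geq l'_{k+n}$ immediately; the task is to prove the matching upper bound $p'_{k+n}\leq l'_{k+n}$.

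The plan is to argue that Case~1 of the line-version algorithm is triggered when $I'_{k+n}$ is processed. If so, the algorithm sets $p'_{k+n}=l'_{k+n}$ directly and finalizes it, completing the proof. Let $k_s$ denote the front of the critical list $\scrL$ at the moment just before $I'_{k+n}$ is processed. By Invariant~(4), $p'_{k_s}=l_{k_s}$, and by Invariant~(8), the implicit placement gives $p'_{k+n-1}=l_{k_s}+d_{\min}(k+n-1-k_s)$. Using $l'_{k+n}=l'_k+|\calC|$, the Case-1 condition $p'_{k+n-1}+d_{\min}\leq l'_{k+n}$ is equivalent to
\[
l_{k_s}-l'_k+d_{\min}(k+n-k_s)\leq |\calC|.
\]

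I would establish this inequality by combining two facts. First, the algorithm-wide bound $d_{\min}\leq |\calC|/n$ holds throughout: the cycle version initializes $d_{\min}=|\calC|/n$, and $d_{\min}$ is only updated (via Case~3) by strictly decreasing. Second, I would establish the structural inequality $l_{k_s}-l'_k\leq (|\calC|/n)(k_s-k)$ at the moment in question. Granted these two,
\[
l_{k_s}-l'_k+d_{\min}(k+n-k_s)\leq (|\calC|/n)(k_s-k)+(|\calC|/n)(k+n-k_s)=|\calC|,
\]
verifying the Case-1 trigger condition.

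The principal obstacle is establishing the structural inequality. I would prove it by tracing the evolution of the front $k_s$ from the processing of $I'_k$ through that of $I'_{k+n-1}$. The maximality of $k$ in $[1,n]$ forbids any Case-1 event at intervals $I'_{k+1},\ldots,I'_n$ (such an event would give $p'_j=l'_j$ for some $j\in(k,n]$, contradicting maximality), so on the first copy the only way the front advances past $k$ is through Case-3 front-processing. On the second copy $I'_{n+1},\ldots,I'_{k+n-1}$, a Case-1 reset at $I'_m$ would install $m$ as the new front, but the Case-1 condition $p'_{m-1}+d_{\min}\leq l'_m$ combined with $d_{\min}\leq |\calC|/n$ forces exactly $l_m-l'_k\leq (|\calC|/n)(m-k)$, preserving the structural inequality. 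Case-3 front-processing events can be analyzed similarly using the priority property of $\scrL$ (Invariant~9) and Observation~\ref{obser:10}, which control how far $k_s$ can jump forward relative to $l_{k_s}$. Once the structural inequality is verified, Case~1 triggers at $I'_{k+n}$ and yields $p'_{k+n}=l'_{k+n}$.
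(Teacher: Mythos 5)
Your plan hinges on two ingredients: the global bound $d_{\min}\leq |\calC|/n$ (which is fine) and the ``structural inequality'' $l'_{k_s}-l'_k\leq (|\calC|/n)(k_s-k)$ for the front $k_s$ of $\scrL$ just before $I'_{k+n}$ is processed. That inequality is false in general, so the decomposition does not work. Concretely, take $n=3$, $|\calC|=30$, $I_1=[0,1]$, $I_2=[8,8.1]$, $I_3=[9,9.1]$ (clockwise positions). Running the algorithm on $I'_1,\ldots,I'_6$ with initial $d_{\min}=10$: $I'_2$ and $I'_3$ both fall into the third case, the front-processing at $I'_3$ finalizes $p'_2=l'_2=8$ and sets $d_{\min}=1.1$, and then $I'_4$ triggers the first case, so just before $I'_5$ is processed we have $\scrL=\{4\}$, i.e.\ $k_s=4$, while $k=2$ (since $p'_3=9.1\neq l'_3$). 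Your inequality would require $l'_4-l'_2=22\leq (30/3)(4-2)=20$, which fails. The Case-1 condition at $I'_5$ does hold here ($30+1.1\leq 38$), but only because $d_{\min}$ has dropped far below $|\calC|/n$; the slack comes from the interaction between how far the front has advanced and how much $d_{\min}$ has decreased, and these cannot be bounded separately the way you propose. A symmetric problem occurs when $k_s<k\leq n$: there your inequality reads $l'_k-l'_{k_s}\geq(|\calC|/n)(k-k_s)$, while Invariant~(8) gives the \emph{reverse} bound $l'_k-l'_{k_s}\leq d_{\min}(k-k_s)\leq(|\calC|/n)(k-k_s)$, so it can only hold with equality. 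The remaining steps (``Case-3 front-processing events can be analyzed similarly\ldots'') are asserted rather than argued, so there is no fallback.

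A secondary issue is that you are trying to prove something strictly stronger than the lemma, namely that the first case of the algorithm literally fires when $I'_{k+n}$ is processed. The lemma only concerns the \emph{final} position of $p'_{k+n}$, which could in principle reach $l'_{k+n}$ later, e.g.\ when $k+n$ becomes the front of $\scrL$ during a subsequent front-processing finalization; conversely, $k$ itself is defined from final positions, so reasoning forward in time from the state before $I'_{k+n}$ requires controlling the algorithm's future as well. The paper avoids all of this by arguing by contradiction on the final configuration: assuming $p'_{k+n}>l'_{k+n}$, it takes the rightmost point $p'_i$ to the left of $p'_{k+n}$ that sits at its interval's left endpoint, notes that all adjacent gaps among $p'_i,\ldots,p'_{k+n}$ are at most the value $d$ of $d_{\min}$ when $I'_i$ was processed, and compares the resulting arc-length bounds between the two copies (using $d\leq|\calC|/n$ when $i\leq n$). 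You would need either to adopt that kind of global argument or to find a correct joint bound relating $l'_{k_s}$ and the current $d_{\min}$; as written, the proof has a genuine gap.
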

\begin{proof}
We prove the lemma by contradiction. Assume to the contrary that
$p_{k+n}'\neq l_{k+n}'$. Since $p_{k+n}'\in I_{k+n}'$, it must be that
$p'_{k+n}>l'_{k+n}$. Let $p_i'$ be the rightmost point of $P'$
to the left of $p'_{k+n}$ such that $p_i'$ is at the left endpoint of its
interval $I_i'$. Depending on whether $i\leq n$, there are two cases.

\begin{enumerate}
\item
If $i> n$, then let $j=i-n$. Since $i<k+n$, $j<k$.
We claim that $|p_{j}'p_k'|<|p_{j+n}'p_{n+k}'|$.

Indeed, since $p_j'\geq l_j'$ and $p_k'=l_k'$, we have $|p_j'p_k'|\leq |l_j'l_k'|$.
Note that $|l_j'l_k'|=|l_{j+n}'l_{k+n}'|$. On the other hand, since
$p_{j+n}'=l_{j+n}'$ and $p_{k+n}'>l_{k+n}'$, it holds that
$|p_{j+n}'p_{k+n}'|>|l_{j+n}'l_{k+n}'|$. Therefore, the claim follows.


Let $d$ be the value of $d_{\min}$ right before the algorithm processes $I_{i}'$.
Since during the execution of our line version algorithm $d_{\min}$ is monotonically decreasing,
it holds that $|p_j'p_k'|\geq d\cdot (k-j)$.
Further, by the definition of $i$, for any $m\in [i+1,k+n]$,
$p_{m}'>l_{m}'$. Thus, according to our line version algorithm,
the distance of every adjacent pair of points of $p_{i}',p_{i+1}'\ldots,p_{k+n}'$
is at most $d$. Thus, $|p_{i}'p_{k+n}'|\leq d\cdot (k+n-i)$.
Since $j=i-n$, we have $|p_{j+n}'p_{k+n}'|\leq d\cdot (k-j)$.
Hence, we obtain $|p_j'p_k'|\geq |p_{j+n}'p_{k+n}'|$. However, this contradicts with our
above claim.

\item
If $i\leq n$, then by the definition of $k$, we have $i=k$.
Let $d$ be the value of $d_{\min}$ right before the algorithm processes $I_i'$.
By the definition of $i$, the distance of every adjacent pair of points of
$p_k',p_{k+1}'\ldots,p_{k+n}'$ is at most $d$.
Hence, $|p_k'p_{k+n}'|\leq n\cdot d$. Since $p_k'=l_k'$ and $p_{n+k}'>l_{n+k}'$, we have
$|p_k'p_{n+k}'|>|l_k'l_{n+k}'|=|\calC|$. Therefore, we obtain that
$n\cdot d>|\calC|$.

However, since we initially set
$d_{\min}=|\calC|/n$ and the value $d_{\min}$ is monotonically
decreasing during the execution of the algorithm, it must hold that $n\cdot
d\leq |\calC|$. We thus obtain contradiction.
\end{enumerate}

Therefore, it must hold that $p'_{n+k}=l'_{n+k}$. The lemma thus follows.
\qed
\end{proof}

We construct a solution set $P$ for our cycle version problem by mapping the points
$p_k',p_{k+1}',\ldots,p_{n+k-1}'$ back to $\calC$. Specifically, for each
$i\in [k,n]$, we put $p_i$ at a point on $\calC$ with a distance $p_i'-l_i'$
clockwise from $l_i$; for each
$i\in [1,k-1]$, we put $p_{i}$ at a point on $\calC$ at a distance
$p_{i+n}'-l_{i+n}'$ clockwise from $l_{i}$.
Clearly,  $p_i$ is in $I_i$ for each $i\in [1,n]$.
Hence, $P$ is a ``feasible'' solution for our cycle version problem.
Below we show that $P$ is actually an optimal solution.

Consider the value $d_{\min}$ returned by the line version algorithm after all
intervals of $\calI'$ are processed. Since the distance of every
pair of adjacent points of $p_k',p_{k+1}',\ldots, p_{n+k}'$ is at
least $d_{\min}$, $p_k'=l_k'$, $p_{n+k}'=l_{n+k}'$ (by
Lemma \ref{lem:60}), and $|l_k'l_{n+k}'|=|\calC|$,
by our way of constructing $P$, the distance of
every pair of adjacent points of $P$ on $\calC$ is at least $d_{\min}$.


Recall that $d_{opt}$ is the optimal object value of our cycle version
problem. The following lemma implies that $P$ is an optimal solution.

\begin{lemma}\label{lem:90}
$d_{\min}= d_{opt}$.
\end{lemma}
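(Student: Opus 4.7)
\textbf{Proof proposal for Lemma \ref{lem:90}.} The plan is to establish $d_{\min}=d_{opt}$ via two inequalities. One direction, $d_{opt}\geq d_{\min}$, has essentially been set up in the paragraph preceding the lemma: the set $P$ obtained by mapping $p_k',p_{k+1}',\ldots,p_{n+k-1}'$ back to $\calC$ is a feasible solution for the cycle version and, because $p_k'=l_k'$ and $p_{k+n}'=l_{k+n}'$ (by Lemma~\ref{lem:60}) with $|l_k'l_{k+n}'|=|\calC|$, the wrap-around gap is also $\geq d_{\min}$; thus every pair of adjacent points of $P$ on $\calC$ has distance at least $d_{\min}$, giving $d_{opt}\geq d_{\min}$. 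I would begin the proof by restating this briefly.

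For the other direction $d_{\min}\geq d_{opt}$, I would split into two cases according to whether the algorithm ever decreased $d_{\min}$ below its initial value. In the easy case $d_{\min}=|\calC|/n$, the bound $d_{opt}\leq|\calC|/n$ (observed right before Section~3.1) gives $d_{\min}\geq d_{opt}$ immediately. In the remaining case $d_{\min}<|\calC|/n$, by the property of the modified line algorithm stated in the paragraph introducing the initialization, $d_{\min}$ is the optimal objective value of the line problem on $\calI'$. So it suffices to exhibit a feasible solution for that line problem whose minimum pairwise distance is at least $d_{opt}$.

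To produce such a solution, I would unfold any optimal cycle solution $\{q_1,\ldots,q_n\}$ (with pairwise distances all $\geq d_{opt}$) to $\ell$: set $q_i'=|\rarrow{l_1q_i}|\in I_i'$ for $i\in[1,n]$ and $q_{i+n}'=q_i'+|\calC|\in I_{i+n}'$ for $i\in[1,n]$. The key routine check is that consecutive $q_i',q_{i+1}'$ along $\ell$ are at distance $\geq d_{opt}$: for $1\leq i<n$ this distance equals $|\rarrow{q_iq_{i+1}}|$, which is at least the cycle distance between $q_i$ and $q_{i+1}$, hence $\geq d_{opt}$; for $i=n$ the distance equals $|\calC|-\sum_{j=1}^{n-1}|\rarrow{q_jq_{j+1}}|=|\rarrow{q_nq_1}|\geq d_{opt}$; the case $n<i<2n$ is identical by the periodicity of the copies. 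Adjacent-gap bounds imply all pairwise distances are at least $d_{opt}$, so this is a feasible line solution of value $\geq d_{opt}$, proving $d_{\min}\geq d_{opt}$.

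The main obstacle I expect is the last bullet point: carefully verifying that each unfolded adjacent gap on $\ell$ is bounded below by the cycle distance (not just by the shorter arc) and in particular handling the wrap-around gap between $q_n'$ and $q_{n+1}'$. Once that is done, combining both inequalities yields $d_{\min}=d_{opt}$.
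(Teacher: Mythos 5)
Your proof is correct, but the harder direction $d_{opt}\le d_{\min}$ follows a genuinely different route from the paper's. The paper keeps its witness on the line side: it takes the pair $(i^*,j^*)$ with $d_{\min}=(r'_{j^*}-l'_{i^*})/(j^*-i^*)$ guaranteed by Invariant (2), chooses $i^*$ as large as possible so that no $p_i'$ with $i^*<i\le j^*$ sits at its left endpoint, and then invokes Lemma~\ref{lem:60} a second time to conclude $j^*-i^*<n$; this lets it reinterpret $r'_{j^*}-l'_{i^*}$ as a clockwise arc such as $|\rarrow{l_{i^*}r_{j^*}}|$ (via a three-case analysis on where $[i^*,j^*]$ falls relative to $n$) and run the Lemma~\ref{lem:10} pigeonhole argument directly on $\calC$. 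You instead transfer an optimal cycle solution onto the line: unfolding $\{q_1,\dots,q_n\}$ into $2n$ points of $\calI'$ shows the line optimum on $\calI'$ is at least $d_{opt}$ (your adjacent-gap checks, including the wrap-around gap $q_{n+1}'-q_n'=|\rarrow{q_nq_1}|$, are all sound, since each clockwise arc dominates the corresponding cycle distance), while the fact that $d_{\min}$ equals the line optimum once $d_{\min}<|\calC|/n$ closes the argument. Your direction of transfer buys a real simplification: you never need to bound $j^*-i^*$, you avoid the three-case reinterpretation of the witness pair, and you use Lemma~\ref{lem:60} only once (for the wrap-around gap in the feasibility direction $d_{\min}\le d_{opt}$), whereas the paper needs it in both directions. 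The one mild caveat is that you lean on the paper's asserted-but-not-locally-proved remark that $d_{\min}<|\calC|/n$ forces $d_{\min}$ to be the line optimum; that assertion does follow from Invariant (2), the feasibility of $P'$, and Corollary~\ref{coro:10} applied to $\calI'$, so your argument is complete, but citing those three facts directly would make it self-contained.
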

\begin{proof}
Since $P$ is a feasible solution with respect to $d_{\min}$, $d_{\min}\leq
d_{opt}$ holds.

If $d_{\min}=|\calC|/n$, since $d_{opt}\leq |\calC|/n$, we obtain $d_{opt}\leq
d_{\min}$. Therefore, $d_{opt}= d_{\min}$, which leads to the lemma.

In the following, we assume $d_{\min}\neq |\calC|/n$.
Hence, $d_{\min}<|\calC|/n$. According to our line
version algorithm, there must
exist $i^*<j^*$ such that $d_{\min}=\frac{r'_{j^*}-l'_{i^*}}{j^*-i^*}$.
We assume there is no $i$ with $i^*<i<j^*$ such that
$d_{\min}=\frac{r'_{j^*}-l'_{i}}{j^*-i}$ since otherwise we could change $i^*$ to
$i$.
Since $d_{\min}=\frac{r'_{j^*}-l'_{i^*}}{j^*-i^*}$, it is necessary that
$p_{i^*}'=l'_{i^*}$ and $p_{j^*}'=r'_{j^*}$. By the above assumption,
there is no $i\in [i^*,j^*]$ such that $p_i'=l'_i$.
Since $p_k'=l'_k$ and $p_{k+n}'=l'_{k+n}$ (by
Lemma \ref{lem:60}), one of the following three cases must be true:
$j^*<k$, $k\leq i^*<j^*<n+k$, or
$n+k\leq i^*$.  In any case, $j^*-i^*<n$.
By our way of defining $r'_{j^*}$ and $l'_{i^*}$, we have the
following:
\begin{equation*}
d_{\min}=\frac{r'_{j^*}-l'_{i^*}}{j^*-i^*}=
\begin{cases}
|\rarrow{l_{i^*}r_{j^*}}|/(j^*-i^*), & \text{if $j^*\leq n$},\\
|\rarrow{l_{i^*}r_{j^*-n}}|/(j^*-i^*), & \text{if $i^*\leq n<j^*$},\\
|\rarrow{l_{i^*-n}r_{j^*-n}}|/(j^*-i^*) & \text{if $n<i^*$}.
\end{cases}
\end{equation*}


We claim that $d_{opt}\leq d_{\min}$ in all three cases: $j^*\leq n$, $i^*\leq
n<j^*$, and $n<i^*$. In the
following we only prove the claim in the first case where $j^*\leq n$ since the other two
cases can be proved analogously (e.g., by re-numbering the indices).

Our goal is to prove $d_{opt}\leq \frac{|\rarrow{l_{i^*}r_{j^*}}|}{j^*-i^*}$.
Consider any optimal solution in which
the solution set is $P=\{p_1,p_2,\ldots, p_n\}$.  Consider the points
$p_{i^*},p_{i^*+1},\ldots,p_{j^*}$, which are in the
intervals $I_{i^*},I_{i^*+1},\ldots,I_{j^*}$. Clearly,
$|\rarrow{p_kp_{k+1}}|\geq d_{opt}$ for any $k\in [i^*,j^*-1]$.
Therefore, we have $|\rarrow{p_{i^*}p_{j^*}}|\geq d_{opt}\cdot
(j^*-i^*)$. Note that $|\rarrow{p_{i^*}p_{j^*}}|\leq
|\rarrow{l_{i^*}r_{j^*}}|$. Consequently, we obtain
$d_{opt}\leq \frac{|\rarrow{l_{i^*}r_{j^*}}|}{j^*-i^*}$.


Since both $d_{\min }\leq d_{opt}$ and $d_{opt}\leq d_{\min}$,
it holds that $d_{opt}=d_{\min}$. The lemma thus follows.
\qed\end{proof}

The above shows that $P$ is an optimal solution with $d_{opt}=d_{\min}$.
The running time of the algorithm is $O(n)$ because the line version algorithm
runs in $O(n)$ time.
As a summary, we have the following theorem.

\begin{theorem}\label{theo:30}
The cycle version of the points dispersion problem is solvable in $O(n)$ time.
\end{theorem}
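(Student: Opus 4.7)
The plan is to assemble Theorem~\ref{theo:30} directly from the reduction and the two lemmas already established, since the heavy lifting has been done. My proof would be essentially a summary that certifies (i) feasibility of the candidate solution $P$ on $\calC$, (ii) its optimality, and (iii) the $O(n)$ running time.

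First I would recall the construction: we build $\calI'$ of $2n$ intervals on $\ell$ as described, invoke the line-version algorithm with the modified initialization $d_{\min}=|\calC|/n$, let $k$ be the largest index in $[1,n]$ with $p'_k=l'_k$ (which exists since $p'_1=l'_1$), and map $p'_k,\ldots,p'_{n+k-1}$ back to $\calC$ to obtain $P$. Feasibility (each $p_i\in I_i$) is immediate from the fact that $p'_j\in I'_j$ for each $j$ and $I'_j$ is a copy of $I_{j}$ (or $I_{j-n}$). To verify the pairwise-distance lower bound on $\calC$, I would apply Lemma~\ref{lem:60} to conclude $p'_{k+n}=l'_{k+n}$, so the $n+1$ consecutive points $p'_k,p'_{k+1},\ldots,p'_{n+k}$ on $\ell$ span an arc of total length exactly $|\calC|$ and start/end at the left endpoints of their intervals. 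By the seventh and eighth invariants carried to termination, every adjacent pair among these points lies at distance at least $d_{\min}$ on $\ell$; transferring them back to $\calC$ preserves these consecutive arc distances, so every adjacent pair on $\calC$ is at arc distance at least $d_{\min}$ as well, and the pair $(p_{k-1},p_k)$ (wrapping around) inherits distance at least $d_{\min}$ because the total arc $|\calC|$ is partitioned into $n$ consecutive pieces each $\ge d_{\min}$.

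Optimality then follows directly from Lemma~\ref{lem:90}, which gives $d_{\min}=d_{opt}$; combined with feasibility this means $P$ achieves the optimal minimum pairwise distance on $\calC$. For the time bound I would invoke Theorem~\ref{theo:10} on the input $\calI'$ of size $2n$; the line algorithm runs in $O(2n)=O(n)$ time, and the remaining work is $O(n)$: building the $2n$ copies, scanning to find $k$, and mapping the $n$ chosen points back to $\calC$.

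The only subtle step—and where a careful reader could stumble—is verifying that the ``wrap-around'' adjacency $(p_{k-1},p_k)$ on $\calC$ indeed has distance at least $d_{\min}$. Since the algorithm guarantees only adjacent pairs of $p'_k,\ldots,p'_{n+k}$ are $\ge d_{\min}$ apart on $\ell$, and these span exactly one full revolution thanks to Lemma~\ref{lem:60}, the cyclic adjacency $(p_{k-1},p_k)$ corresponds to the adjacency $(p'_{k-1+n},p'_{k+n})$ (equivalently $(p'_{n+k-1},l'_{n+k})$) on $\ell$, which is already covered. Once this correspondence is made explicit, the proof is a two-line chain: feasibility $+$ Lemma~\ref{lem:90} $\Rightarrow$ optimality, and Theorem~\ref{theo:10} applied to $2n$ intervals gives the $O(n)$ bound.
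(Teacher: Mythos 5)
Your proposal is correct and follows essentially the same route as the paper: reduce to $2n$ intervals on a line, run the line algorithm with $d_{\min}$ initialized to $|\calC|/n$, use Lemma~\ref{lem:60} to certify the wrap-around adjacency via $p'_{k+n}=l'_{k+n}$, invoke Lemma~\ref{lem:90} for optimality, and charge the running time to Theorem~\ref{theo:10} on an input of size $2n$. Your explicit identification of the cyclic adjacency $(p_{k-1},p_k)$ with the line adjacency $(p'_{n+k-1},p'_{n+k})$ is exactly the point the paper's argument rests on.
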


\section{Concluding Remarks}
\label{sec:conclusion}

In this paper we present a linear time algorithm for the point dispersion problem on
disjoint intervals on a line. Further, by making use of this algorithm, we also
solve the same problem on a cycle in linear time.

It would be interesting to consider the general case of the problem
in which the intervals may overlap. In fact, for the line version, if we know the order of the
intervals in which the sought points in an optimal solution are sorted
from left to right, then we can apply our algorithm to
process the intervals in that order and the obtained solution is an
optimal solution. For example, if no interval is allowed to
contain another completely, then there must exist an optimal solution in which
the sought points from left to right correspond to the intervals
ordered by their left (or right) endpoints. Hence, to solve the
general case of the line version problem, the key is to find an order of
intervals. This is also the case for the cycle version.

\paragraph*{Acknowledgment.}
The authors would like to thank Minghui Jiang for suggesting the problem to them.
This research was supported in part by NSF under Grant CCF-1317143.

\bibliography{reference}
\bibliographystyle{plain}

\end{document}